\newcommand{\methodname}{R3DM}
\theoremstyle{plain}
\newtheorem{theorem}{Theorem}[section]
\newtheorem{lemma}[theorem]{Lemma}
\theoremstyle{definition}
\theoremstyle{remark}
\icmltitlerunning{R3DM: Enabling Role Discovery and Diversity Through Dynamics Models in Multi-agent Reinforcement Learning}
\DeclareMathOperator*{\argmax}{arg\,max}
\newcommand{\rolet}{m_i^t}
\newcommand{\trajt}[1]{\tau_i^{#1}}
\newcommand{\rolerept}[1]{z_i^{#1}}
\newcommand{\agentembt}[1]{e_i^{#1}}
\newcommand{\agentembjt}[2]{e_{#1}^{#2}}
\newcommand{\obst}[1]{o_i^{#1}}
\newcommand{\actt}[1]{a_i^{#1}}
\newcommand{\historyem}{\theta_{e}}
\newcommand{\roleenc}{\theta_{r}}
\begin{document}

\twocolumn[
\icmltitle{R3DM: Enabling Role Discovery and Diversity Through Dynamics Models in Multi-agent Reinforcement Learning}



\icmlsetsymbol{equal}{*}

\begin{icmlauthorlist}
\icmlauthor{Harsh Goel}{yyy}
\icmlauthor{Mohammad Omama}{yyy}
\icmlauthor{Behdad Chalaki}{hri}
\icmlauthor{Vaishnav Tadiparthi}{hri}
\icmlauthor{Ehsan Moradi Pari}{hri}
\icmlauthor{Sandeep Chinchali}{yyy}
\end{icmlauthorlist}

\icmlaffiliation{yyy}{Chandra Family Department of Electrical and Computer Engineering, The University of Texas at Austin, USA}
\icmlaffiliation{hri}{Honda Research Institute, USA}

\icmlcorrespondingauthor{Harsh Goel}{harshg99@utexas.edu}

\icmlkeywords{Machine Learning, ICML}

\vskip 0.3in
]



\printAffiliationsAndNotice{}  

\begin{abstract}
Multi-agent reinforcement learning (MARL) has achieved significant progress in large-scale traffic control, autonomous vehicles, and robotics. Drawing inspiration from biological systems where roles naturally emerge to enable coordination, role-based MARL methods have been proposed to enhance cooperation learning for complex tasks. 
However, existing methods exclusively derive roles from an agent's past experience during training, neglecting their influence on its future trajectories.
This paper introduces a key insight: an agent’s role should shape its future behavior to enable effective coordination. 
Hence, we propose Role Discovery and Diversity through Dynamics Models (R3DM), a novel role-based MARL framework that learns emergent roles by maximizing the mutual information between agents' roles, observed trajectories, and expected future behaviors. R3DM optimizes the proposed objective through contrastive learning on past trajectories to first derive intermediate roles that shape intrinsic rewards to promote diversity in future behaviors across different roles through a learned dynamics model. Benchmarking on SMAC and SMACv2 environments demonstrates that R3DM outperforms state-of-the-art MARL approaches, improving multi-agent coordination to increase win rates by up to 20\%. The code is available at \hyperlink{https://github.com/UTAustin-SwarmLab/R3DM}{https://github.com/UTAustin-SwarmLab/R3DM}.

\end{abstract}


\begin{figure}[t]
    \centering
    \includegraphics[width=\linewidth, height=2.5in]{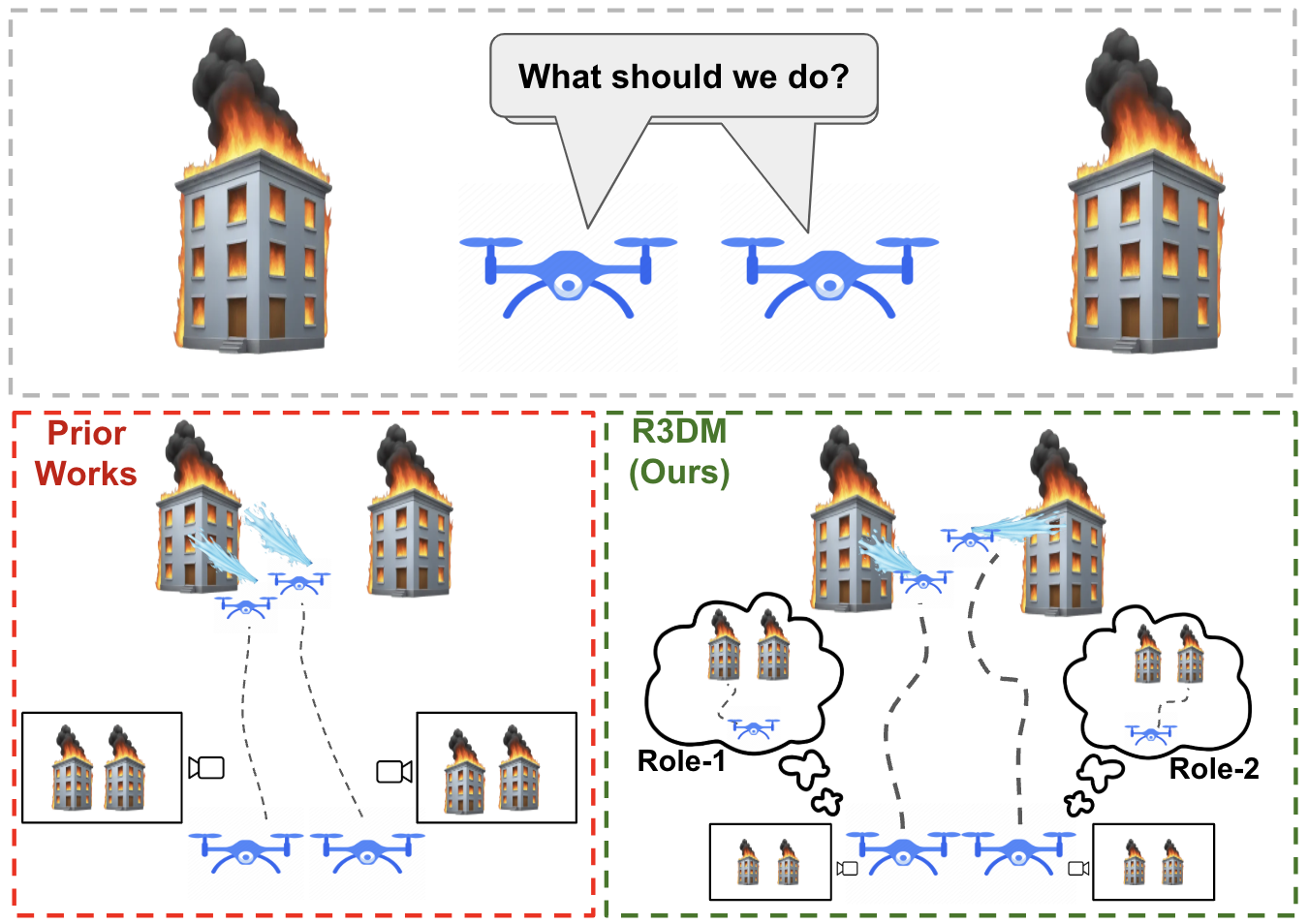}
    \caption{In a fire-fighting scenario with two drones, standard role-based multi-agent RL methods fail to distribute drones effectively, as roles are inferred from exhibited behavior. By linking roles to future expected behavior via a dynamics model, \methodname\ achieves better role differentiation and coordination.
    }
    \label{fig:teaser}
\end{figure}

\section{Introduction}

Multi-agent Reinforcement Learning (MARL) has seen increasing progress in board games \cite{meta2022human, perolat2022mastering}, traffic signal control \cite{chu2019multi, chinchali2018cellular, goel2023sociallight, zhang2025coordlight2}, autonomous vehicles \cite{zhao2023multi, han2023multi}, stock markets \cite{bao2019multi}, and collaborative robotics \cite{krnjaic2022scalable, zhang2022pipo}. A significant challenge is to learn policies for agents that enable effective coordination. Centralized Training Decentralized Execution (CTDE) is a common paradigm for training collaborative policies. CTDE methods such as QMIX \cite{rashid2020qmix}, QPlex \cite{wang2020qplex}, MAPPO \cite{yu2022mappo}, MAAC \cite{iqbal2019maac} and VDN \cite{sunehag2018vdn} commonly learn shared policy parameters from local observations across multiple agents. These approaches have been effective on benchmarks such as the Starcraft Multi-agent Challenge (SMAC) \cite{samvelyan2019starcraft, ellis2024smacv2}.

However, these methods, which rely primarily on learning shared policy parameters for agents, often hinder the learning of diverse behaviors that individual agents need to exhibit to complete tasks reliably. To address this limitation, recent methods have explored various approaches: (i) encouraging individualized behaviors through diversity or skill-driven intrinsic rewards \cite{jiang2021emergence, liu2023contrastive, li2021celebrating,liu2022heterogeneous}, and (ii) explicitly training heterogeneous policies \cite{zhong2024heterogeneous}. While these methods promote the learning of diverse behaviors, the focus on developing individualized behaviors often compromises effective coordination \cite{hu2024acorm}. Furthermore, sacrificing parameter sharing in heterogeneous setups \cite{liu2022heterogeneous, zhong2024heterogeneous} reduces sample efficiency.

To balance diversity with efficient cooperation, role-based MARL methods like ACORM \cite{hu2024acorm}, CIA \cite{liu2023contrastive}, RODE \cite{wang2021rode}, GoMARL \cite{zang2023gomarl}, and ROMA \cite{wang2020roma} learn emergent roles from agents’ past observations. These approaches encourage complementary and distinct behaviors within the parameter-sharing framework of CTDE training. However, deriving roles solely from past observations often fails to enforce effective coordination and cooperative behavior. For example, in a fire-fighting task involving drones extinguishing two separate fires (Fig. 1), if both drones have similar initial observations, they are likely to adopt identical roles early on. This leads to redundant behaviors, such as both drones moving toward the same fire, rather than distributing effectively to address both fires.


To address this issue, the central insight of this paper is that an agent’s role should shape its future behavior, meaning an agent adopting different roles at any given moment will naturally follow distinct trajectories. For example, in a multi-drone firefighting scenario, a drone taking up a role to target any specific building would exhibit future observations and actions that naturally diverge as its targets change. 
This motivates our approach, \textbf{Role Discovery and Diversity through Dynamics Models} (\methodname), which maximizes the mutual information (MI) between agents' roles at a timestep, their observed trajectories up to that timestep, and expected future trajectories. By linking the current roles of agents to their future expected behaviors or trajectories, \methodname\ ensures agents learn more distinct yet complementary roles.

Our key contributions are summarized as follows:
\begin{enumerate}
    \item We propose a novel information-theoretic objective between agents' roles, observed trajectories, and expected future trajectories. This objective is optimized jointly by a contrastive learning framework \cite{hu2024acorm} for learning intermediate roles, which shapes intrinsic rewards devised from observation dynamics models to promote role-specific behavior.
    
    \item We integrate this framework with an existing role-based MARL method, ACORM \cite{hu2024acorm}. By balancing task rewards with intrinsic rewards, our approach enhances the diversity of role-specific behavior, thereby enabling exploration.
    
    \item We validate \methodname\ on SMAC \cite{samvelyan2019starcraft} and SMACv2 \cite{ellis2024smacv2} environments. Our method achieves superior coordination capabilities compared to existing role-based MARL approaches by showing up to 20\% improvement in the final win rates in the challenging SMAC and SMACv2 environments such as 3s5z\_vs\_3s6z and protoss\_10\_vs\_11.
\end{enumerate}

\section{Background}

\subsection{Preliminaries}
Multi-agent tasks are modelled as a Decentralized Partially Observable Markov Decision Process (Dec-POMDP) $G = \langle I, S, O, A, P, R, \Omega, n, \gamma \rangle$, 
where $I = \{0, 1,\dots,n-1 \}$ is a finite set of $n$ agents, $s^t \in S$ is the global state from the continuous state space $S$.
Each agent $i$ receives observation $o_i^t \in O_i$ where $O_i$ is its observation space, via the observation model $\Omega(s^t, i) :  S \times I \to O_i$. This results in a joint observation space $O \equiv (O_1 \times O_2 \times \dots \times O_n)$. 
Each agent selects an
action $a_i^t \in A_i$ from its discrete action space $A_i$ by learning a policy $\pi_i: T_i^t \times A_i \to [0, 1]$, where  $\tau_i^t \in (A_i \times O_i)^t \equiv T_i^t$ is the agent's observation-action history. This yields a joint action $\mathbf{a}^t = [a_1^t, \ldots, a_n^t] \in A$, where $A \equiv (A_1 \times A_2 \times \dots \times A_n)$ is the global action space. The joint action policy is given by $\boldsymbol{\pi}$. 
The team transitions to a next state $s^{t+1}$ through a transition dynamics model $P(s^{t+1}|s^t, \mathbf{a}^t): S \times A \times S \to [0, 1] $, and receives a reward $r^t = R(s^t, \mathbf{a}^t) $ shared by all agents. Here, the discount factor is $\gamma \in [0, 1]$.
The joint policy induces the joint state action-value function $Q_{\text{tot}}^{\pi}(s, \mathbf{a}; \phi)$ at state $s$ and action $\mathbf{a}$ parameterized by a neural network $\phi$ as
\begin{equation}
  Q^{\pi}_{\text{tot}}(s, \mathbf{a}; \phi) = \mathbb{E}_{ a^{0:\infty}, s^{0:\infty} \sim \boldsymbol{\pi}}  \left[\sum_{j=0}^\infty \gamma^{j} r^{j} | s^0 = s, \mathbf{a}^0 = \mathbf{a} \right].
\end{equation}

\subsection{Value Function Factorization for Centralized Training with Decentralized Execution}

MARL through the CTDE paradigm, such as QMIX \cite{rashid2020qmix}, has been a major focus in recent research due to the balance between centralized learning and decentralized decision-making. The key idea is the factorization of the value function, which composes the global action value function $Q_{\text{tot}}^{\pi}(s^t,\mathbf{a}^t; \phi)$$=f(Q_1(\tau_1^t, a_1^t; \phi_Q),$ $ \dots, Q_n(\tau_n^t, a_n^t; \phi_Q), s^t, \mathbf{a^t}; \phi)$
through a function $f$, over individual utilities for each agent $Q_i(\tau_i^t, a_i^t; \phi_Q)$ with parameters $\phi_Q$.  By adhering to the Individual-Global-Maximum (IGM) principle, agents can independently maximize their utility functions to obtain an optimal global action, i.e., $ \argmax_{\mathbf{a}^t} Q_{\text{tot}}^{\pi}(s^t,\mathbf{a}^t; \phi) = [\argmax_{a_1^t} Q_1(\tau_1^t, a_1^t; \phi_Q),..., \argmax_{a_n^t} Q_n(\tau_n^t, a_n^t; \phi_Q)]$. 


\subsection{Roles in MARL}

Roles in MARL characterize distinct agent behaviors to foster coordination in CTDE frameworks with shared parameters. Given a Dec-POMDP $G = \langle I, S, O, A, P, R, \Omega, n, \gamma \rangle$, a role for agent $i$ is a time-varying latent variable $m_i^t \in M$ determined from its observation-action history $\tau_i^t \in (A_i \times O_i)^t$. Here, $M$ is a fixed set of roles with cardinality $|M|$. Knowing these roles apriori is infeasible unless they are handcrafted for the task, therefore, each agent learns a role-representation $\rolerept{t} \in Z$ and a role-conditioned policy $\pi(\cdot|\trajt{t}, \rolerept{t}): T^t_i \times A_i \times Z \to [0,1]$ where $Z \subset \mathbb{R}^{d_r}$ is the role embedding space of dimension $d_r$. 
These roles emerge dynamically and enable agents to exhibit specialized yet complementary behaviors to coordinate for the given task.


\section{Related Work}

\subsection{Role-Based Multi-Agent Reinforcement Learning}
ROMA \cite{wang2020roma} introduces a role embedding space based on agents' current observations, and observation-action histories, marking it as the first work to achieve this.
Expanding on ROMA, RODE \cite{wang2021rode} and SIRD \cite{zeng2023sird} associate each role with a predefined subset of the joint action space. This approach simplifies the learning process by reducing action-space complexity. However, these methods maintain roles that access a static subset of actions during training, which prevents effective coordination.
In contrast, LDSA \cite{yang2022ldsa_subgoal_role}, ALMA \cite{iqbal2022alma}, and \citet{xia2023dynamic_subgoal_role} learn different sub-tasks within the MARL framework to dynamically assign roles to agents.
Contrastive learning techniques have also been explored to learn role representations for improving multi-agent coordination. CIA \cite{liu2023contrastive} employs contrastive learning to learn roles by differentiating their credit or contributions to the team. Similarly, ACORM \cite{hu2024acorm} leverages contrastive learning to learn roles that distinguish between different behavioral patterns.
On the other hand, methods like SePS \cite{christianos2021seps} and GoMARL \cite{zang2023gomarl} promote heterogeneity and specialization by dynamically assigning agents into different subgroups, either using a dynamics model (SePS) or through weights on the individual utilities in the mixing network (GoMARL).
Different from role-based methods that derive emergent roles for agents based on their observation-action histories, R3DM introduces an intrinsic reward to encourage agents to differentiate their expected future behaviors across different roles.

\subsection{Intrinsic Rewards for Diversity in MARL}
Diversity-based methods incentivize individualized or exploratory behaviors; for instance, MAVEN \cite{mahajan2019maven} is the first work that fosters diverse exploratory behaviors through latent space representations for hierarchical control. Additionally, EMC \cite{zheng2021episodic} proposes to learn better behaviors through the means of an intrinsic reward derived from the error in the predictions of the Q-values.  CDS \cite{li2021celebrating} promotes individuality by designing intrinsic rewards to maximize the mutual information between agent identities and roles and EOI \cite{jiang2021eoi} encourages agents to explore by training an observation-to-identity classifier, fostering individuality. In contrast to these methods, \methodname\ not only promotes individuality when beneficial, but also learns emergent role embeddings that distinguish between agents' past and expected future behaviors to enable a better balance between specialization and coordination in complex environments with varying team dynamics.


\section{Method}

In this section, we present our approach, \methodname, which enhances role learning in MARL through a novel information-theoretic objective that captures the relationship between an agent's role, past behavior, and future trajectory.
In Section \ref{sec:mi_obj}, we derive a tractable lower bound of this objective, decomposing it into two key components:
(i) deriving intermediate role embeddings from agents’ observation-action histories, and (ii) ensuring these embeddings guide diverse and distinct future behaviors to foster effective coordination.
We maximize the first component using the contrastive learning framework introduced in \citep{hu2024acorm} (Section \ref{sec:emb_obj}), therefore obtaining intermediate role embeddings from agents’ histories.
Next, we introduce intrinsic rewards to maximize the second component (Section \ref{sec:div_obj}), ensuring that these embeddings generate diverse yet role-specific future behaviors. This reward balances diversity across trajectories associated with different roles while preserving the alignment of each trajectory with its corresponding role.

\subsection{Role Representation Learning Objective}
\label{sec:mi_obj}
Given an agent $i$, its role $\rolet$ and concatenated observation-action trajectory $\trajt{t+k}$ which comprises its observation-action history $\trajt{t}$ and future trajectory $\trajt{t+1:t+k}$ of $k$ steps,
the MI objective  conditioned on $\trajt{t}$ is given as 
\begin{equation}
    I(\trajt{t+k};\rolet \mid \trajt{t}) = \mathbb{E}_{\trajt{t+k}, \rolet} \left[ \log \left( \frac{p(\trajt{t+k} \mid \rolet, \trajt{t})}{p(\trajt{t+k} \mid \trajt{t})} \right) \right].
    \label{eq:obj}
\end{equation}
However, maximizing this objective is intractable in practice. Therefore, we utilize the following theorem to obtain a lower bound of this objective.

\begin{theorem}
\label{thm:4.1}
Given a set of roles $M$ with cardinality $|M|$, a role $\rolet \in M$, and a concatenated observation-action trajectory $\trajt{t+k}$ which comprises its observation-action history $\trajt{t}$ and future trajectory $\trajt{t+1:t+k}$ with $k$ steps, if $\agentembt{t} = f_{\historyem} \left(\trajt{t}\right)$ denotes the embedding of the observation-action history obtained through a network $\historyem$, and $\rolerept{t} \sim f_{\roleenc}(\rolerept{t} \mid \agentembt{t})$ denotes role embeddings obtained from a network $\roleenc$, then  
\begin{align}
    I(\trajt{t+k};\rolet \mid \trajt{t}) \geq    
          \mathbb{E}_{\agentembt{t}, \rolerept{t}, \rolet} &\left[\log \left( \frac{p(\rolerept{t}\mid\agentembt{t})}{p(\rolerept{t})} \right) \right]
      + \nonumber \\   
      & I(\trajt{t+1:t+k}; \rolerept{t} \mid \trajt{t}),   
\label{eq:obj_simp} 
\end{align}
where $I(\trajt{t+1:t+k}; \rolerept{t} \mid \trajt{t})$ is the MI between the future trajectory $\trajt{t+1:t+k}$ and role embedding $\rolerept{t}$ at time $t$ conditioned on history $\trajt{t}$.
\end{theorem}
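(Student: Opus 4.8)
The plan is to move from the intractable objective $I(\trajt{t+k};\rolet)$ to the two learnable quantities on the right-hand side by inserting the intermediate variables $\agentembt{t}$ and $\rolerept{t}$ into a Markov chain and then splitting the trajectory $\trajt{t+k}=(\trajt{t},\trajt{t+1:t+k})$ with the chain rule for mutual information. The first thing I would record is that the integrand of the first right-hand term does not depend on $\rolet$, so marginalizing $\rolet$ out of the expectation identifies that term exactly as $I(\agentembt{t};\rolerept{t})$. The target then reduces to establishing $I(\trajt{t+k};\rolet)\ge I(\agentembt{t};\rolerept{t})+I(\trajt{t+1:t+k};\rolerept{t})$.

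First I would write down the Markov structure implied by the construction: since $\agentembt{t}=f_{\historyem}(\trajt{t})$ is a deterministic function of the history and $\rolerept{t}\sim f_{\roleenc}(\cdot\mid\agentembt{t})$ depends on the trajectory only through $\agentembt{t}$, we have $\trajt{t}\to\agentembt{t}\to\rolerept{t}$, and hence $I(\trajt{t};\rolerept{t})=I(\agentembt{t};\rolerept{t})$ (appending a deterministic function of the conditioning variable leaves the mutual information unchanged, and $\rolerept{t}\perp\trajt{t}\mid\agentembt{t}$). Next I would pass from $\rolet$ to $\rolerept{t}$ via the data-processing inequality, using that the discrete role is a sufficient descriptor of the embedding so that $\trajt{t+k}\to\rolet\to\rolerept{t}$ and therefore $I(\trajt{t+k};\rolet)\ge I(\trajt{t+k};\rolerept{t})$. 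Finally I would apply the chain rule to $I(\trajt{t+k};\rolerept{t})=I(\trajt{t},\trajt{t+1:t+k};\rolerept{t})$ to expose the history term $I(\trajt{t};\rolerept{t})=I(\agentembt{t};\rolerept{t})$ together with a future term, and then reduce that future term to the unconditional $I(\trajt{t+1:t+k};\rolerept{t})$ appearing in the statement.

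The hard part is that last reduction. A direct chain rule produces the \emph{conditional} mutual information $I(\trajt{t+1:t+k};\rolerept{t}\mid\trajt{t})$ rather than the unconditional $I(\trajt{t+1:t+k};\rolerept{t})$, and their difference equals an interaction-information term $I(\trajt{t+1:t+k};\trajt{t}\mid\rolerept{t})-I(\trajt{t+1:t+k};\trajt{t})$ whose sign is not free; obtaining a valid \emph{lower} bound forces this difference to be non-negative. Making the inequality hold therefore requires an explicit structural assumption on the generative process — most naturally on how the role-conditioned policy $\pi(\cdot\mid\trajt{t},\rolerept{t})$ couples the future to the history and the embedding — and I would pin down exactly that conditional-independence assumption, together with the sufficiency assumption used in the data-processing step (which fixes the Markov direction between $\rolet$ and $\rolerept{t}$). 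The whole argument rests on the direction of these cross terms rather than on any routine estimate, so I expect the bulk of the effort to go into stating and justifying those two assumptions rather than into algebra.
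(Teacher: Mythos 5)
Your route is genuinely different from the paper's. The paper never invokes the data-processing inequality or the chain rule for mutual information: it works directly on the density ratio $p(\trajt{t+k}\mid\rolet)/p(\trajt{t+k})$, factorizes the numerator through the latent variables $(\trajt{t},\agentembt{t},\rolerept{t})$ under the sufficiency approximation $p(\agentembt{t}\mid\trajt{t},\rolet)\approx p(\agentembt{t}\mid\rolet)$, rewrites the ratio as an inner expectation of $\frac{p(\trajt{t+1:t+k}\mid\rolerept{t})\,p(\rolerept{t}\mid\agentembt{t})}{p(\trajt{t+1:t+k})\,p(\rolerept{t})}$, and then applies Jensen's inequality ($\log\mathbb{E}[X]\ge\mathbb{E}[\log X]$) to pull the logarithm inside the inner expectation; the product then splits into the two terms of the claim. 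Two remarks on your plan. First, the difficulty you isolate as the hard part --- that the chain rule yields $I(\trajt{t+1:t+k};\rolerept{t}\mid\trajt{t})$ rather than the unconditional quantity --- largely dissolves once you see how the paper actually uses the term: in the proof of Theorem \ref{thm:intrinsic} the quantity written $I(\trajt{t+1:t+k};\rolerept{t})$ is \emph{defined} via $p(\trajt{t+1:t+k})=p(\trajt{t+k}\mid\trajt{t})$, i.e., it is exactly the history-conditioned mutual information that your chain-rule decomposition produces, so no interaction-information correction is needed and no sign argument has to be made. Second, your data-processing step $I(\trajt{t+k};\rolet)\ge I(\trajt{t+k};\rolerept{t})$ requires the Markov chain $\trajt{t+k}\to\rolet\to\rolerept{t}$, which runs against the generative direction: the construction gives $\trajt{t+k}\to\trajt{t}\to\agentembt{t}\to\rolerept{t}$, so conditioning on $\rolet$ does not sever the dependence of $\rolerept{t}$ on the trajectory. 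You flag this correctly, and it must be imposed as an assumption rather than derived; note that it plays the same role as the approximation $p(\agentembt{t}\mid\trajt{t},\rolet)\approx p(\agentembt{t}\mid\rolet)$ on which the paper's own derivation rests, so neither argument is more rigorous than the other on this point. With the conditioning convention for the future-trajectory term made explicit and the sufficiency assumption stated, your skeleton closes, and it is arguably cleaner than the paper's density manipulation because it makes visible exactly which conditional-independence statements carry the proof.
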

We provide the proof in Appendix \ref{app:proof_4_1}. This theorem intuitively decomposes the objective into two parts i) learning role embeddings through the observation action history by maximizing the term $\mathbb{E}_{\agentembt{t}, \rolerept{t}, \rolet} \left[\log \left( \frac{p(\rolerept{t}\mid\agentembt{t})}{p(\rolerept{t})} \right) \right]$, and ii) maximizing the MI between the role embeddings and the future expected trajectory which results in intrinsic rewards as shown in Section \ref{sec:div_obj}. In the context of our firefighting example, drones would first derive intermediate role embeddings based on their current history. These embeddings act as compact representations of their roles, enabling the drones to differentiate their future trajectories, which would ultimately guide them to address fires in distinct sectors.

\subsection{Role-embedding learning Objective}
\label{sec:emb_obj}

We maximize the term $
\mathbb{E}_{\agentembt{t},\rolet,\rolerept{t}}\left[\log \left(\frac{p(\rolerept{t}\mid \agentembt{t})}{p(\rolerept{t})}\right)\right]$ in Eq. \ref{eq:obj_simp} to learn intermediate role embeddings. However, maximizing this term directly is computationally intractable.  To address this, we use the framework developed in previous work (ACORM \cite{hu2024acorm}), which leverages contrastive learning to optimize the lower bound of this objective \cite{radford2021learning, oord2018representation}. For the sake of completeness, we explain the contrastive learning framework that clusters similar behaviors while separating distinct ones, effectively distinguishing between unique behavioral patterns and moving beyond simple agent identifiers as seen in \cite{li2021celebrating}. For instance, in our firefighting scenario, drones assigned to extinguish fires in different buildings would be grouped into distinct roles based on their trajectories.
Using this approach, we aim to derive intermediate role embeddings that capture meaningful distinctions in behavior.
We utilize Theorem \ref{thm:infonce} to achieve this goal.

\begin{theorem}
    \label{thm:infonce}
    Let M denote a set of roles, and $\rolet \in M$ denote a role. If $\agentembt{t} = f_{\historyem} \left(\trajt{t}\right)$ is an embedding of the observation-action history through a network $\historyem$, and $\rolerept{t} \sim f_{\roleenc}(\rolerept{t} \mid \agentembt{t})$ is a role-embedding from network $\roleenc$, then  
    \begin{align}
        \mathbb{E}&_{\agentembt{t},\rolerept{t},\rolet}
         \log \left(\frac{p(\rolerept{t}\mid \agentembt{t})}{p(\rolerept{t})}\right) \geq  \log |M| + \nonumber \\ &\mathbb{E}_{\agentembt{t}, \rolerept{t}, \rolet}\left[\log{\frac{g(\rolerept{t}, \agentembt{t})}{ g(\rolerept{t}, \agentembt{t}) +  \sum_{ m_i^{t*} \in M/\rolet} g(\rolerept{t}, \agentembt{t*})}}\right],
    \end{align}
   where $g(\rolerept{t}, \agentembt{t})$ is a function whose optimal value is proportional to $\frac{p(\rolerept{t}\mid \agentembt{t})}{p(\rolerept{t})}$, $m_i^{t*}$ is a role from $M$, and $\agentembt{t*}$ is its corresponding observation-action history embedding.
\end{theorem}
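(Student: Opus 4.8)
The plan is to recognize the right-hand side as the population form of the InfoNCE / contrastive objective and to carry out the classical density-ratio argument of \citep{oord2018representation}, adapted to our notation: the role embedding $\rolerept{t}$ plays the part of the context/anchor, while the history embeddings (the positive $\agentembt{t}$ together with the $|M|-1$ negatives $\agentembt{t*}$ indexed by $m_i^{t*}\in M/\rolet$) play the part of the $|M|$ candidate samples. The positive $\agentembt{t}$ is drawn jointly with $\rolerept{t}$, whereas the negatives are treated as draws from the marginal over roles.

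First I would set up the contrastive classification problem: the bracketed term is read as the log-probability, under the critic $g$, of correctly identifying which of the $|M|$ candidates is the true match for $\rolerept{t}$. Writing the posterior that candidate $\agentembt{t}$ is the positive under the generative model ``one candidate from the joint, the rest from the marginal,'' the Bayes-optimal critic is
\[
g^{*}(\rolerept{t},\agentembt{t}) \;\propto\; \frac{p(\rolerept{t}\mid\agentembt{t})}{p(\rolerept{t})},
\]
which is exactly the optimal value posited in the statement, with the normalizing constant cancelling inside the softmax. Since the objective is maximized at $g^{*}$, substituting $g^{*}$ produces a valid lower bound.

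Next I would substitute $g^{*}$ and rewrite the log-ratio as $-\log\bigl(1+\sum_{m_i^{t*}} r(\rolerept{t},\agentembt{t*})/r(\rolerept{t},\agentembt{t})\bigr)$, where $r(\rolerept{t},\agentembt{t})=p(\rolerept{t}\mid\agentembt{t})/p(\rolerept{t})$ denotes the density ratio. The key identity is that the negatives come from the marginal, so $\mathbb{E}_{\agentembt{t*}}\bigl[r(\rolerept{t},\agentembt{t*})\bigr]=1$; replacing the sum over the $|M|-1$ negatives by its expectation gives $|M|-1$. Bounding the resulting expression and using $\mathbb{E}\bigl[\log r(\rolerept{t},\agentembt{t})\bigr]=\mathbb{E}\bigl[\log\frac{p(\rolerept{t}\mid\agentembt{t})}{p(\rolerept{t})}\bigr]$ recovers the left-hand side, while the additive constant becomes $\log|M|$, yielding the claimed inequality.

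The main obstacle is making this middle step rigorous: the replacement of $\sum_{m_i^{t*}} r(\rolerept{t},\agentembt{t*})$ by $|M|-1$ is exact only in expectation, and the pointwise bound $1+(|M|-1)/r\ge |M|/r$ holds only when $r\ge 1$. I would resolve this either by (i) applying Jensen's inequality to push the expectation through the concave $\log$ (the standard route that tightens the ``$\approx$'' in the original derivation), or (ii) settling for the slightly weaker but fully rigorous constant $\log(|M|-1)$ via $1+(|M|-1)/r\ge (|M|-1)/r$, noting that $\log|M|$ is recovered as $|M|$ grows. The remaining ingredients---the Bayes-optimal critic and the marginal identity $\mathbb{E}_{\agentembt{t*}}[r]=1$---are routine.
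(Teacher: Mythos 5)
Your proposal follows essentially the same route as the paper's own proof: identify the Bayes-optimal critic $g^{*}\propto p(\rolerept{t}\mid\agentembt{t})/p(\rolerept{t})$ via the contrastive classification posterior, substitute it into the InfoNCE objective, replace the sum over negatives by $(|M|-1)$ times the marginal expectation of the density ratio, and absorb the constant into $\log|M|$. You have in fact been more careful than the paper at the one delicate step --- the paper also passes through two ``$\approx$'' replacements and then asserts $1+(|M|-1)/r\ge |M|/r$ without the caveat that this requires $r\ge 1$, so your explicit acknowledgment of that gap (and the Jensen or $\log(|M|-1)$ fixes) only strengthens the argument.
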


\paragraph{Contrastive Learning.}

Theorem \ref{thm:infonce} (proof provided in Appendix \ref{app:proof_4_2}) enables us to train the trajectory and role embedding networks, $\historyem$ and $\roleenc$. First, agents are clustered into one of the $|M|$ roles based on their embeddings $\agentembt{t}$, from which positive and negative pairs of role and agent embeddings are sampled. Following \cite{hu2024acorm}, we employ contrastive learning using bilinear products \citep{laskin2020curl} to formulate a score function $g(\rolerept{t}, \agentembt{t})$ (from Theorem \ref{thm:infonce}) that computes similarity between role and agent embeddings across clusters. We describe these in detail in the Appendix \ref{app:cl_implementation}.

\subsection{Intrinsic Reward to optimize role based diversity}
\label{sec:div_obj}


We introduce intrinsic rewards that maximizes the mutual information (MI) between role embeddings and future trajectories through Eq. \ref{eq:obj_simp} in Theorem \ref{thm:4.1}. These intrinsic rewards are shown to enable diverse future behaviors while simultaneously enforcing role-specific future trajectories, thereby balancing exploration with role-aligned future behaviors. In our running example, these intrinsic rewards enforce the learning of role embeddings and subsequent policies that ensure that drones distribute effectively across the two fires. The intrinsic reward comprises: (i) \textit{Policy Intrinsic Reward}, which encourages variability in an agent's policies through roles, and (ii) \textit{Dynamics Intrinsic Reward}, which quantifies the predictive influence of role embeddings on an agent's future trajectory.

\begin{theorem}
\label{thm:intrinsic}
    Given a local trajectory $\trajt{t} = \{o_0, a_0, o_1, a_1, \dots, o_t\}$ and a learned role representation $\rolerept{t}$ for an agent, the MI between the future trajectory $\trajt{t+1:t+k}$ and the role representation $\rolerept{t}$ can be expressed as
\begin{align}
    I(\trajt{t+1:t+k};\rolerept{t} \mid \trajt{t}) & \nonumber = \mathbb{E}_{\trajt{t},\rolerept{t}} \left[ \sum_{l=t}^{t+k-1}  \log \left(\frac { p(\actt{l}\mid \trajt{l}, \rolerept{t})}{p(\actt{l}\mid \trajt{l})}\right)  + \right. \\ 
    & \quad \left. \sum_{l=t}^{t+k-1}  \log \left(\frac { p(\obst{l+1} \mid \trajt{l}, \rolerept{t}, \actt{l})}{p(\obst{l+1} \mid  \trajt{l}, \actt{l})}\right) \right],
    \label{eq:intrinsic}
\end{align}
where $p(\actt{l}\mid \trajt{l}, \rolerept{t})$ is the probability of taking action $\actt{l}$ given the trajectory $\trajt{l}$ and role representation $\rolerept{t}$, and $p(\obst{l+1} \mid \trajt{l}, \rolerept{t}, \actt{l})$ is the probability of observing $\obst{l+1}$ given the trajectory $\trajt{l}$, action $\actt{l}$, and role representation $\rolerept{t}$.

\end{theorem}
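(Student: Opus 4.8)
The plan is to expand the mutual information $I(\trajt{t+1:t+k};\rolerept{t})$ by first writing it in terms of conditional entropies and then exploiting the autoregressive factorization of a trajectory into per-step transition and policy terms. Starting from $I(\trajt{t+1:t+k};\rolerept{t}) = \mathbb{E}\left[\log\frac{p(\trajt{t+1:t+k}\mid \rolerept{t}, \trajt{t})}{p(\trajt{t+1:t+k}\mid \trajt{t})}\right]$, where the conditioning on the history $\trajt{t}$ appears because the role is derived from and the future depends on the past, I would factorize the future trajectory $\trajt{t+1:t+k}$ as the ordered sequence of observations and actions $\{a_t, o_{t+1}, a_{t+1}, \dots, o_{t+k}\}$. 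The key structural fact is the chain rule for probabilities: the joint density of the future rollout factors into a product over each timestep $l$ from $t$ to $t+k-1$ of a policy factor $p(\actt{l}\mid \trajt{l}, \cdot)$ and a transition/observation factor $p(\obst{l+1}\mid \trajt{l}, \actt{l}, \cdot)$, where $\trajt{l}$ is the running history that already absorbs everything generated before step $l$.

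The main steps would be: (1) apply the chain rule to both the numerator $p(\trajt{t+1:t+k}\mid \rolerept{t}, \trajt{t})$ and the denominator $p(\trajt{t+1:t+k}\mid \trajt{t})$, splitting each into its product of per-step policy and observation factors; (2) take the logarithm, which converts the two products into two sums over $l$; (3) observe that the ratio telescopes termwise, pairing the policy factor in the numerator with that in the denominator and likewise for the observation factors, yielding $\sum_l \log\frac{p(\actt{l}\mid\trajt{l},\rolerept{t})}{p(\actt{l}\mid\trajt{l})} + \sum_l \log\frac{p(\obst{l+1}\mid\trajt{l},\rolerept{t},\actt{l})}{p(\obst{l+1}\mid\trajt{l},\actt{l})}$; and (4) rewrite the outer expectation over $\trajt{t+1:t+k}$ and $\rolerept{t}$ as an expectation over $\trajt{t}$ and $\rolerept{t}$ with the inner summands, which holds because each summand depends only on quantities up to step $l{+}1$ and the tower property lets us marginalize the intervening randomness. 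The first sum is the \textit{Policy Intrinsic Reward} contribution and the second is the \textit{Dynamics Intrinsic Reward} contribution, matching the decomposition advertised after the theorem.

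The subtle point — and the step I expect to require the most care — is justifying that the denominator factorizes in exactly the same per-step form as the numerator, namely that $p(\actt{l}\mid\trajt{l})$ and $p(\obst{l+1}\mid\trajt{l},\actt{l})$ (without conditioning on $\rolerept{t}$) are the correct marginal factors after integrating out $\rolerept{t}$. This is where the Markov structure of the Dec-POMDP and the fact that $\rolerept{t}$ is a deterministic-or-stochastic function of $\trajt{t}$ (so it adds no information beyond the history for the environment transition) must be invoked carefully: the observation dynamics factor $p(\obst{l+1}\mid\trajt{l},\rolerept{t},\actt{l})$ genuinely differs from $p(\obst{l+1}\mid\trajt{l},\actt{l})$ only through the policy-induced distribution over the path, not through the environment's transition kernel, so the role influences future observations \emph{indirectly} via the actions it shapes. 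I would make this precise by conditioning consistently on the full running history $\trajt{l}$ at each step, which absorbs the induced correlations and keeps the chain-rule factorization valid under both measures, so that the termwise ratio is well defined and the telescoping in step (3) goes through cleanly.
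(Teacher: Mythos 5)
Your proposal follows essentially the same route as the paper's proof: identify $I(\trajt{t+1:t+k};\rolerept{t})$ with the expected log-ratio of $p(\trajt{t+k}\mid\trajt{t},\rolerept{t})$ to $p(\trajt{t+k}\mid\trajt{t})$, apply the chain-rule (Markov) factorization of the rollout into per-step policy and observation factors under both measures, and let the logarithm turn the ratio of products into the two advertised sums. Your additional care about why the role-marginalized denominator factorizes in the same per-step form is a point the paper simply asserts, but it does not change the argument.
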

The MI objective is decomposed into two terms: (i) the influence of the role on future action selection and (ii) the influence of the role on expected future observations. See Appendix \ref{app:proof_4_3} for the proof.

\paragraph{Policy Intrinsic Reward.}

The diversity induced by the role embeddings in the policy of the agents is characterized by the term  $\mathbb{E}_{\trajt{t},\rolerept{t}} \left[ \sum_{l=t}^{t+k-1} \log \left(\frac{p(\actt{l} \mid \trajt{l}, \rolerept{t})}{p(\actt{l} \mid \trajt{l})}\right) \right]$ in Eq.~\ref{eq:intrinsic}. To encourage this diversity, we devise a policy intrinsic reward. Since the policies in QMIX follow an $\epsilon$-greedy strategy, we obtain a lower bound for this ratio by leveraging the non-negativity of the KL divergence $
\mathbb{D}_{KL}\left(  p\left(\cdot \mid \trajt{l}, \rolerept{t}\right) \mid \mid \text{SoftMax}\left( Q_i\left( \cdot \mid \trajt{l}, \rolerept{t}; \phi_Q \right) \right) \right)
$. Therefore, based on the equation
\begin{align}
       &\mathbb{E}_{\rolerept{t}, \trajt{l}, \actt{l}} \left[ \log \left(\frac { p(\actt{l}\mid \trajt{l}, \rolerept{t})}{p(\actt{l}\mid \trajt{l})}\right) \right] \nonumber \geq\\  
       & \mathbb{E}_{\rolerept{t}, \trajt{l}, \actt{l}} \left[ \log \left(\frac {\text{SoftMax}\left( Q_i\left ( \actt{l} \mid \tau_i^l, \rolerept{t} ; \phi_Q \right) \right) }{p(\actt{l} \mid \trajt{l})}\right) \right],
\end{align}

we devise the intrinsic rewards as
\begin{align}
    &r^t_{i,\mathrm{pol}} = \nonumber \\ & \sum_{l=t}^{t+k-1} \mathbb{D}_{KL} \left( \text{SoftMax}\left( Q_i\left( \cdot \mid \trajt{l}, \rolerept{t}; \phi_Q \right) \right) \mid\mid p\left(\cdot \mid \trajt{l}\right) \right),
    \label{eq:int_policy}
\end{align}
where $p\left(\cdot \mid \trajt{l}\right)$ is the average action probability over all role embeddings $\rolerept{t}$, i.e., 
$ p\left(\cdot \mid \trajt{l}\right) = \mathbb{E}_{\rolerept{t}} \left[\text{SoftMax}\left( Q_i\left( \cdot \mid \trajt{l}, \rolerept{t} ; \phi_Q\right)\right)\right].$ Note that $\mathbb{E}_{\rolerept{t}}$ is taken over role embeddings for all agents $i \in I$ at timestep $t$, and $\mathbb{D}_{KL}$ denotes the KL divergence. This intrinsic reward encourages diverse policies to be aligned with roles. We provide a detailed explanation in Appendix \ref{app:intrinsic_clar}.

\paragraph{Dynamics Intrinsic Reward.}

We derive the dynamics intrinsic reward from the latter term in Eq. \ref{eq:intrinsic}
$\mathbb{E}_{\trajt{t},\rolerept{t}} \left[\sum_{l=t}^{t+k-1}  
  \log \left(\frac{p(\obst{l+1} \mid \trajt{l}, \rolerept{t}, \actt{l})}{p(\obst{l+1} \mid \trajt{l}, \actt{l})}\right) \right]$, which is equivalent to the mutual information between the future predicted observations $\obst{t+1:t+k}$ and the embedding role $\rolerept{t}$. 
  To compute this ratio we learn two models: (i) a role-agnostic model that approximates observation dynamics $p\left(\obst{l+1} \mid \trajt{l}, \actt{l}\right)$ (ii) a role-conditioned dynamics model $q_{\psi}\left(\obst{l+1} \mid \trajt{l}, \rolerept{t}, \actt{l}\right)$ to approximate the posterior observation dynamics $p\left(\obst{l+1} \mid \trajt{l}, \rolerept{t}, \actt{l}\right)$. We then obtain a lower bound due to the non-negativity of the KL Divergence $\mathbb{D}_{KL}(q_{\psi}\left(\obst{l+1} \mid \trajt{l}, \rolerept{t}, \actt{l}\right) \mid\mid p\left(\obst{l+1} \mid \trajt{l}, \rolerept{t}, \actt{l}\right))$ (see Appendix \ref{app:intrinsic_clar}) through ELBO \cite{jordan1999introduction} as:
\begin{align}
       &\mathbb{E}_{\rolerept{t}, \trajt{l}, \actt{l}} \left[ \log \left(\frac{p(\obst{l+1} \mid \trajt{l}, \rolerept{t}, \actt{l})}
                    {p(\obst{l+1} \mid \trajt{l}, \actt{l})}\right) \right] \nonumber \\  &\geq \mathbb{E}_{\rolerept{t}, \trajt{l}, \actt{l}} \left[ \log \left(\frac{q_{\psi}(\obst{l+1} \mid \trajt{l}, \rolerept{t}, \actt{l})}
                    {p(\obst{l+1} \mid \trajt{l}, \actt{l})}\right) \right].
\end{align}

We leverage the Recurrent State Space Model (RSSM) from the Dreamer series \cite{hafner2019dream, hafner2023mastering} for the observation dynamics $q_{\psi}(\obst{l+1} \mid \trajt{l}, \rolerept{t}, \actt{l})$ and $p(\obst{l+1} \mid \trajt{l}, \rolerept{t}, \actt{l})$. The role conditioned variant of RSSM consists of the following components:
\textbf{1)} A sequence model $q_{\psi_{\mathrm{seq}}}(h_i^t \mid \trajt{t-1})$ with parameters $\psi_{\mathrm{seq}}$, encoding the trajectory $\trajt{t-1}$ into a hidden representation $h_i^t$,
\textbf{2)} An observation encoder $q_{\psi_e}(d_i^t \mid h_i^t, \obst{t})$ with parameters $\psi_e$, mapping observations $\obst{t}$ into a latent representation $d_i^t$,
\textbf{3)} A dynamics predictor $q_{\psi_{\mathrm{dyn}}}(d_i^t \mid h_i^t,\rolerept{t})$ with parameters $\psi_{\mathrm{dyn}}$, predicting the latent representation $d_i^t$ from $h_i^t$,
and \textbf{4)} An observation decoder $q_{\psi_{\mathrm{dec}}}(\obst{t} \mid h_i^t, d_i^t, \rolerept{t})$ with parameters $\psi_{\mathrm{dec}}$, reconstructing observations $\obst{t}$. As stated earlier, we learn two models: a role-agnostic RSSM model $p(\obst{l+1} \mid \trajt{l}, \actt{l})$ and a role-conditioned RSSM model whose corresponding sequence model  $q_{\psi}(\obst{l+1} \mid \trajt{l}, \rolerept{t}, \actt{l})$. These models defer in the implementation of the dynamics predictor and observation decoder sub-networks, where for the role-agnostic model these are $p_{\phi_{\mathrm{dyn}}}(d_i^t \mid h_i^t)$ with parameters $\phi_{\mathrm{dyn}}$ and $p_{\phi_{\mathrm{dec}}}(\obst{t} \mid h_i^t, d_i^t)$ with parameters $\phi_{\mathrm{dec}}$ respectively. This model comprises a separate sequence model
$p_{\phi_{\mathrm{seq}}}(h_i^t \mid \trajt{t-1})$ with parameters $\phi_{\mathrm{seq}}$ and an encoder model $p_{\phi_{\mathrm{enc}}}(h_i^t \mid \trajt{t-1})$ with parameters $\phi_{\mathrm{enc}}$.
Finally, the dynamics intrinsic reward that captures the influence of  $\rolerept{t}$ on predicting the next observation is given by (see Appendix \ref{app:intrinsic_clar} for derivation)
\begin{align}
r^{t}_{\mathrm{i},\mathrm{dyn}}
\;=\;
&\;
\sum_{l=t}^{t+k-1} 
\Bigl(
\beta_{1}\Bigl[
  \log \,q_{\psi_{\mathrm{dec}}}\bigl(\obst{l+1} \mid h_i^{l+1},\, d_i^{l+1},\, z_i^{t}\bigr) \nonumber\\ 
&+\;
  \beta_{2}\,\log \,q_{\psi_{\mathrm{dyn}}}\bigl(d_i^{l+1} \mid h_i^{l+1},\, z_i^{t}\bigr)
\Bigr]
\nonumber \\[6pt]
&-\;
\Bigl[
  \log \,p_{\phi_{\mathrm{dec}}}\bigl(\obst{l+1} \mid h_i^{l+1},\, d_i^{l+1}\bigr)
 \nonumber \\
&+\;
  \beta_{2}\,\log \,p_{\phi_{\mathrm{dyn}}}\bigl(d_i^{l+1} \mid h_i^{l+1}\bigr)
\Bigr]\Bigr),
    \label{eq:int_dyn}
\end{align}
where $\beta_{1}$ and $\beta_{2}$ are hyperparameters that balance the decoder and latent-dynamics terms. The first bracket captures the log-likelihood terms of the \emph{role-conditioned} DreamerV3 model, and the second bracket captures the same terms under the \emph{role-agnostic} model. Note that the hyperparameter $\beta_1$ trades off between the diversity of trajectories across roles and the specificity of a trajectory to a given role. This incentivizes roles that yield distinct behaviors.

\subsection{Final Learning Objective}

Given the task reward $r^t$ at timestep $t$, we compute the total intrinsic reward $r_{\mathrm{int}}^t = \sum_{i \in I} \beta_3 r^{t}_{i, \mathrm{pol}} + r^{t}_{i,\mathrm{dyn}}$ summed over all agents. Note that $\beta_3$ is a hyperparameter to weigh the policy intrinsic reward relative to the dynamics intrinsic reward. Therefore, the final centralized Q learning objective is
\begin{align}
    \mathcal{L}_{TD}(\theta) &= [ r^t + \alpha r^t_{\mathrm{int}} + \gamma \max_{a^{t+1}} Q_{\text{tot}}(s^{t+1}, a^{t+1}; \phi^-)  \nonumber
    \\& \quad  - Q_{\text{tot}}(s^{t}, a^t; \phi) ]^2.
\end{align}

Here, $\gamma$ is the discount factor, $\phi^-$ are the parameters of the frozen target mixing Q network, and $\alpha$ is a hyperparameter that weighs the intrinsic reward with respect to task reward. We detail the choice of the hyper-parameters in Appendix \ref{app:implementation} and the entire algorithm in Appendix \ref{app:algorithm}. 


\begin{figure}[h]
    \centering
    \includegraphics[width=0.99\linewidth]{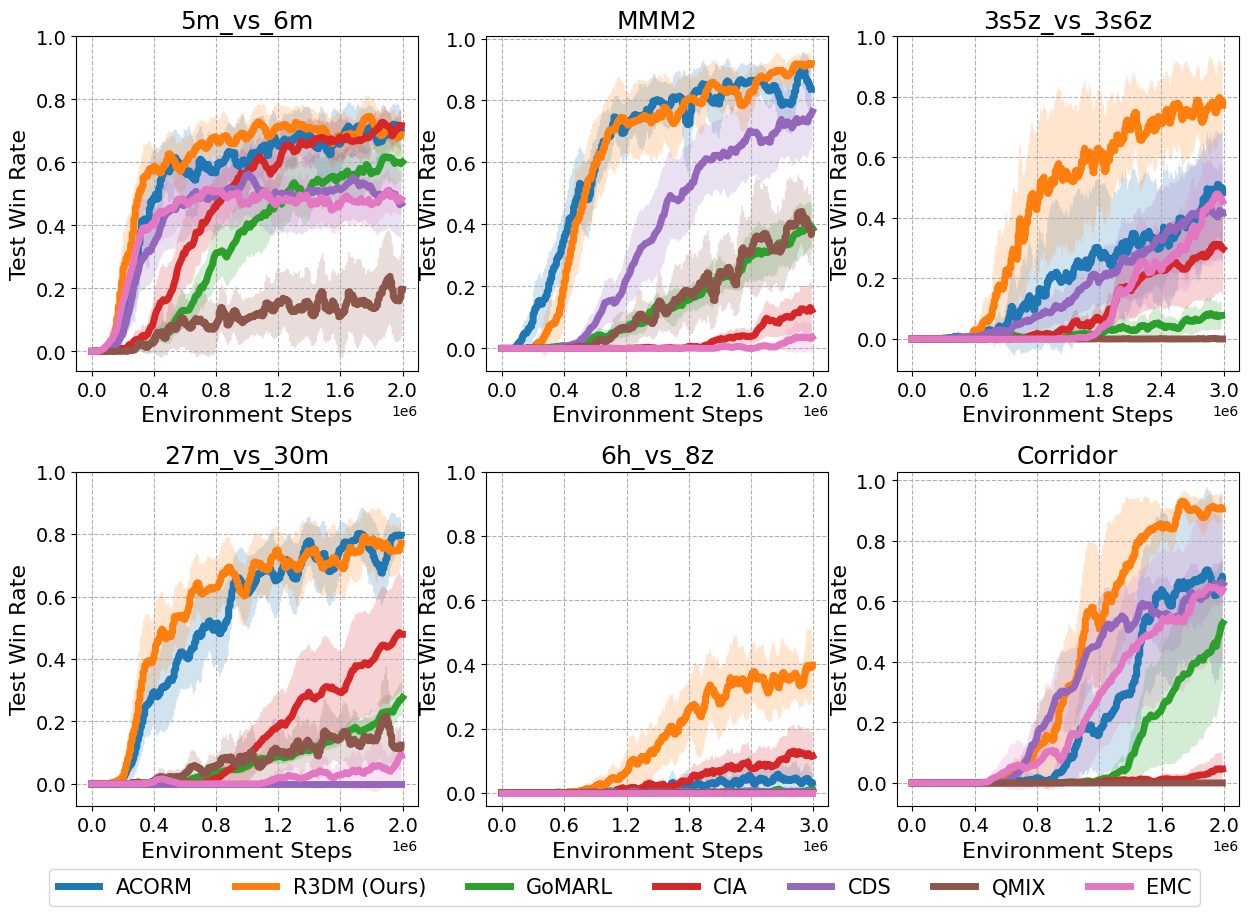}
    \caption{\textbf{Test Win Rate of \methodname\ compared to baselines on 6 maps in the SMAC.} We observe that R3DM improves sample efficiency, and converges to higher win rates on super-hard environments such as 3s5z\_vs\_3s6z, Corridor, and 6h\_vs\_8z.}
    \label{fig:MMM2}
    \vspace{-1em}
\end{figure}


\begin{figure*}[ht]
    \centering
    \begin{subfigure}[t]{0.49\linewidth}
        \centering
        \includegraphics[width=\linewidth]{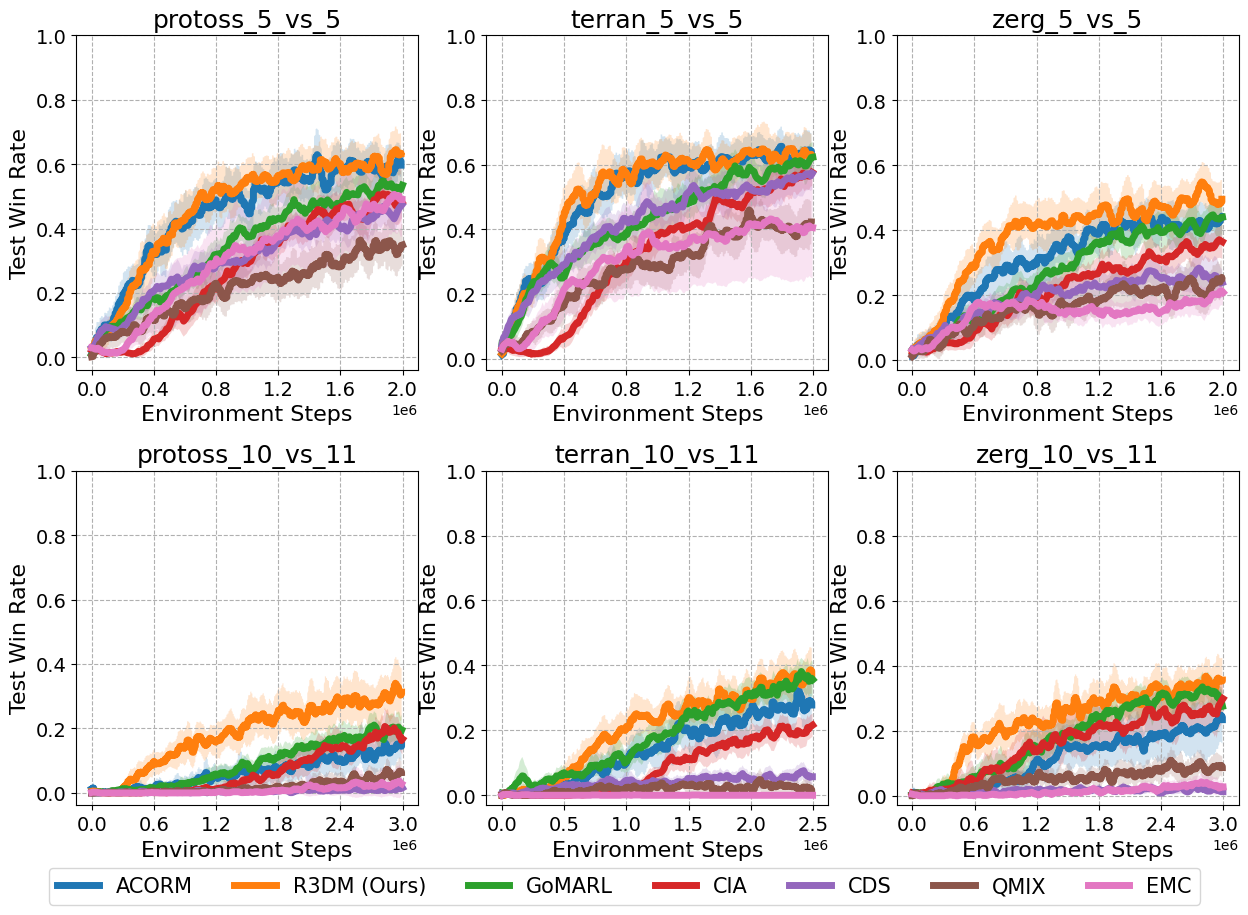}
    \end{subfigure}
    \hfill
    \begin{subfigure}[t]{0.49\linewidth}
        \centering
        \includegraphics[width=\linewidth]{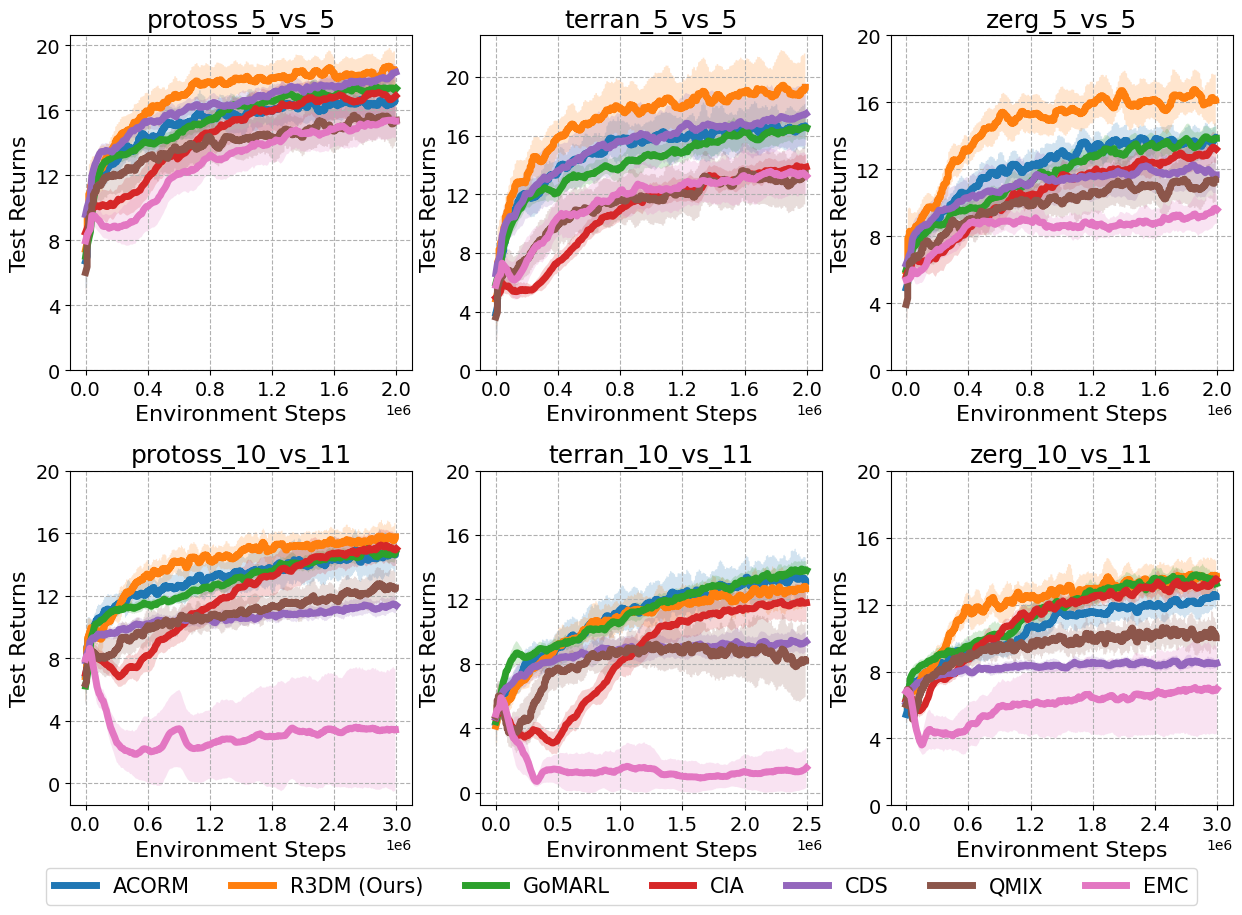}
    \end{subfigure}
    \caption{\textbf{Comparison of Test Win Rate and Test Cumulative Reward on the SMACv2 suite of environments.} We observe that \methodname\ showcases better returns highlighting better strategies learned in environments such as protoss\_5\_vs\_5 and terran\_5\_vs\_5, where its test win rate is equivalent to the best-performing baseline ACORM. In zerg\_5\_vs\_5, protoss\_10\_vs\_11 and zerg\_10\_vs\_11 environments, \methodname\ outperforms the baselines in terms of test win rates. Note that we report the means (solid line) and standard deviation (shaded regions) across 5 seeds.}
    \vspace{-1em}
\end{figure*}
\section{Results}

In this section, we present the results of our experiments conducted on the challenging environments on both the Starcraft Multi-Agent Challenge (SMAC) benchmarks - SMAC and SMACv2. Our primary goal is to evaluate the performance of \methodname\ by addressing the following questions.
\begin{enumerate}
    \item Does \methodname\ facilitate the learning of winning coordination strategies in multi-agent domains (Sections \ref{results:smac} and \ref{results:smacv2}) through intrinsic rewards?
    \item Does \methodname\ learn roles that qualitatively show distinct future behavior to enable cooperation (Section \ref{results:qualitative})?
    \item What design or hyperparameter choices are crucial to \methodname\ ? (Section \ref{results:ablation})
\end{enumerate}

\paragraph{Environments.}

The experiments are conducted on two benchmarks: SMAC and SMACv2, which comprise several micro-management scenarios to control each unit with limited local observations, to defeat an enemy team controlled by a built-in game AI. SMAC includes various scenarios with differing terrain layouts and unit types with varying difficulty levels. We focus on the 6 hard and super-hard scenarios -- 5m\_vs\_6m, MMM2, 3s5z\_vs\_3s6z, 27m\_vs\_30m, 6h\_vs\_8z, and Corridor -- that require more skillful coordination.
In constrast to SMAC, SMAC-v2 introduces stochasticity in the initialization conditions that further challenge agents to explore diverse behaviors that foster skillful coordination. All experiments are conducted using SMAC version SC 2.4.10, and we note that performance comparisons do not translate across different SMAC versions.

\paragraph{Baselines.} 

In addition to QMIX, we compare the performance of \methodname\ against five state-of-the-art baselines, that learn roles through contrastive learning (ACORM \& CIA), learn grouping mechanism for agents (GoMARL), and devise intrinsic rewards for diversity (CDS) and exploration (EMC), on the same 5 random seeds.

\subsection{Results on SMAC}
\label{results:smac}

\methodname\ demonstrates significant improvements over baseline methods, particularly in challenging  SMAC scenarios such as 6h\_vs\_8z, Corridor, and 3s5z\_vs\_3s6z, where agents face stronger enemy teams. These results highlight the effectiveness of our approach in learning cooperative policies as well as achieving higher sample efficiency. For example, in Corridor and 3s5z\_vs\_3s6z, our method outperforms others by achieving faster convergence and higher test win rates compared to other baselines.

We also note that \methodname\ outperforms ACORM, on which our method is based, in many environments. This is likely due to the limitation in ACORM, which restricts the exploration of cooperative strategies due to the strong interdependence between agent identities and past trajectories. In contrast, \methodname\ not only promotes additional diversity through intrinsic rewards but also fosters role-specific future behavior. Therefore, we exhibit improved sample efficiency through exploration and better performance due to the discovery of better-coordinated behaviors amongst agents.

\begin{figure*}[t]
    \centering
    \includegraphics[width=0.95\linewidth]{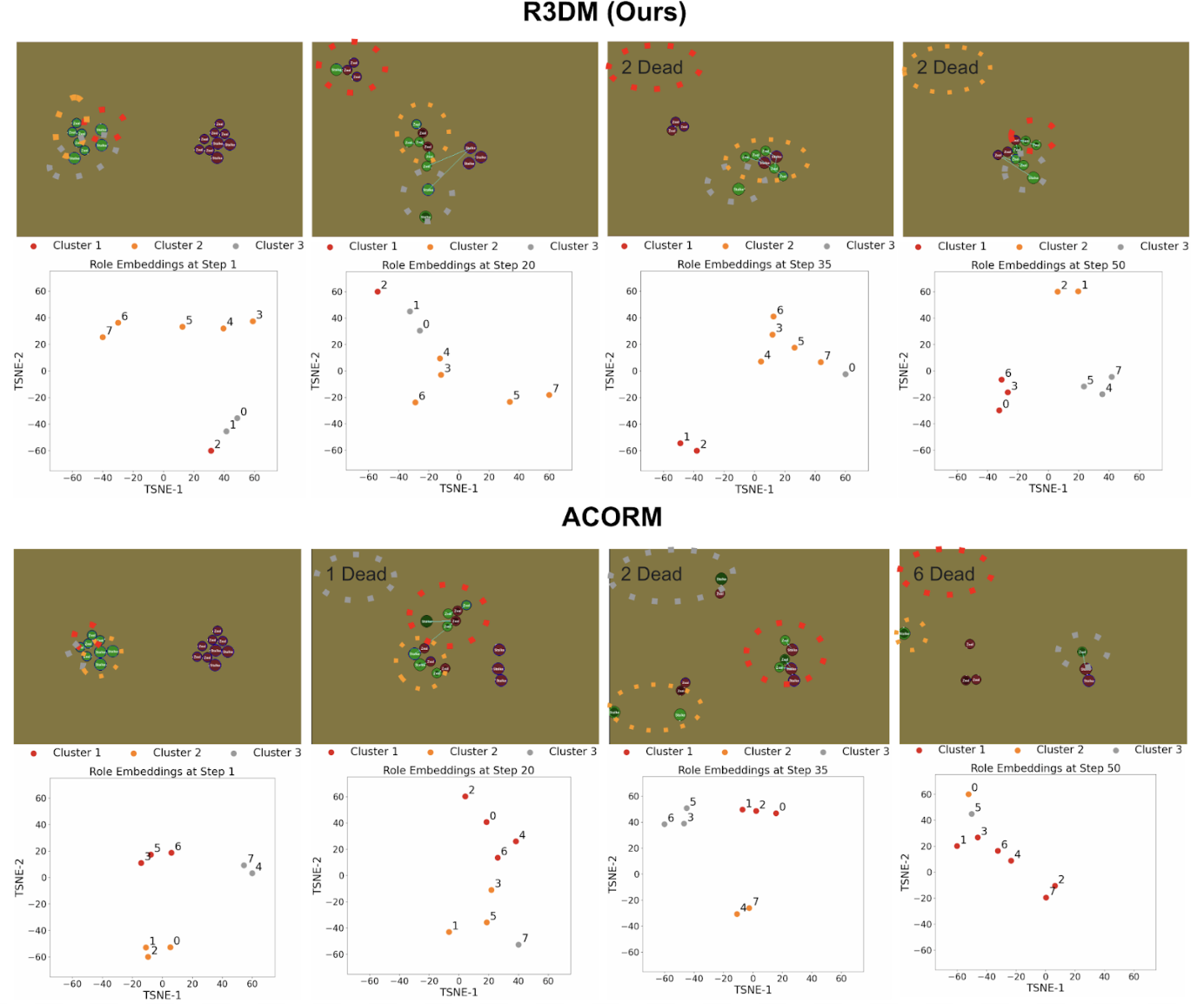}
    \caption{\textbf{We show qualitative results on the 3s\_vs\_5z environment with the corresponding role embeddings and the clusters.} \methodname\ learns a better strategy compared to the baseline ACORM, where one stalker agent, as shown in timestep 20, successfully learns a distinct role that lures enemy zealots for the main team to beat a weakened enemy force in the subsequent timesteps. While ACORM learns differentiated roles based on past observations, the resulting policies are inadequate to win against the enemy team.}
    \label{fig:qualitative_plot}
    \vspace{-1em}
\end{figure*}

\subsection{Results on SMACv2}
\label{results:smacv2}
The stochasticity introduced in SMACv2 such as variations in agent types, starting positions, and configurations, challenges MARL algorithms to explore behaviors that generalize well.
In this context, \methodname\ demonstrates superior performance over established baselines in terms of test win rates and test cumulative reward. We note that \methodname\ outperforms Vanilla QMIX due to the role learning mechanism that results in diverse behaviors. Furthermore, \methodname\ outperforms intrinsic reward methods such as CDS and EMC. CDS incentivizes diversity by ensuring distinct trajectories across all agents, therefore, this leads to the learning of fragmented behaviors ill-suited for skillful coordination. This is evident from the mean test cumulative reward plots, which are comparatively poorer across all environments. Additionally, we notice that EMC does not make good learning progress, especially on the asymmetric 10\_vs\_11 environments, primarily due to its emphasis on exploration. 

Finally, we discuss the performance of \methodname\ against GoMARL, and role-based MARL methods CIA and ACORM. In the 5\_vs\_5 suite of environments, while \methodname\ exhibits similar win rates compared to ACORM we observe that it outperforms most baselines in test cumulative rewards. This indicates that \methodname\ learns more efficient winning strategies. In the more challenging assymetric 10\_vs\_11 environments, we observe that all learning algorithms fail to make sufficient learning progress due to the inherent stochasticity that results in a wider range of unit compositions. We observe that \methodname\ is marginally more sample efficient and exhibits improved win rates and rewards in the protoss and zerg-based environments.

\begin{figure*}[t]
    \centering
    \includegraphics[width=0.95\linewidth]{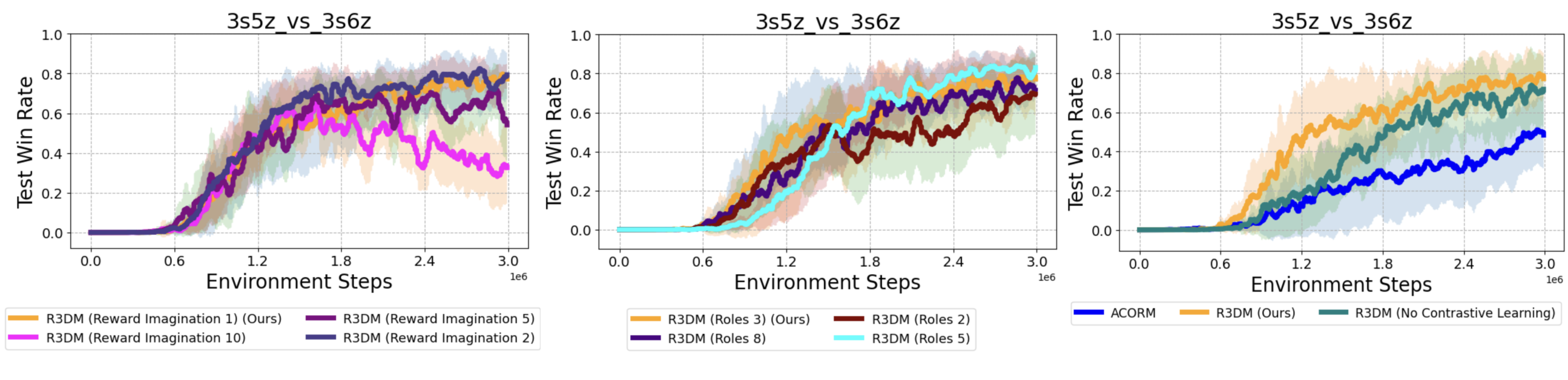}
    \vspace{-2em}
    \caption{\textbf{We conduct an ablation study on the 3s5z\_vs\_3s6z environment to evaluate the impact of: 1) Imagination Horizon for Reward, 2) Number of Roles, and 3) Role optimization without Contrastive Learning.} We observe that a \methodname\ with (a) shorter imagination horizons ($k=1,2$) outperform longer ones ($k=5,10$) due to reduced compounding errors, (b) moderate role cardinality ($N_r=3$) achieves faster convergence despite similar final performance across configurations, and (c) full R3DM with both contrastive learning and intrinsic rewards demonstrates superior performance compared to the partial implementations.}
    \label{fig:qualitative}
    \vspace{-1em}
\end{figure*}

\subsection{Qualitative Analysis}
\label{results:qualitative}
We conduct a qualitative analysis of the strategies learned by our method in contrast to the best-performing baseline ACORM to depict the behavioral differences between our method and ACORM. We plot the visualization of the learned policy in the 3s5z\_vs\_3s6z environment and the TSNE plots of learned role representations at timesteps \textbf{1, 20, 35, and 50} in Fig. \ref{fig:qualitative_plot}. Initial observations at $t=1$ reveal distinct role embedding clusters in both methods, indicating proper role initialization. However, by $t=20$, ACORM's role-conditioned policies show reduced diversity, with all agent clusters converging to attack enemy units en masse. In contrast, \methodname\ demonstrates strategic differentiation: a green-coded stalker agent (highlighted in red) diverts three enemy zealots from the main group, while the remaining agents split into two specialized subteams to eliminate the weakened team. This implies that \methodname\ learns role embeddings that enable the future behavioral diversity needed for coordination. Subsequently, we observe that the agents controlled by policies learned by ACORM are losing to the enemy agents as the episode progresses to $t=35$ and $t=50$. On the other hand, since \methodname\ learned roles that enabled more distinct behavior earlier in the episode, agents effectively defeat enemy agents.  We believe that this is observed due to the introduction of intrinsic rewards that optimize for the specificity of the trajectory to a role, while also enabling exploration by increasing the entropy of the future behaviors demonstrated across all identified roles.

\subsection{Ablation}
\label{results:ablation}

We conduct 3 ablations on the design choices - 1) the impact of the horizon of the imagined trajectories for reward computation, 2) cardinality of the number of roles, and 3) impact of contrastive learning.

\textbf{Impact of Imagination Horizon for Rewards.}
We analyze the effect of varying the number of imagination steps used to generate role-conditioned future trajectories with $k$ timesteps for intrinsic reward computation (note that these are equivalent). In R3DM, a single imagination step is employed, as increasing this to two steps yields no statistically significant performance improvement. Extending the horizon further (e.g., 5 or 10 steps) results in performance degradation, with the most pronounced decline occurring at 10 steps. We hypothesize that this degradation stems from compounding prediction errors in the model, which is conditioned solely on an agent’s local observation-action history. These errors propagate over successive imagination steps, producing intrinsic rewards with increased variance and bias that destabilize policy optimization.

\textbf{Impact of Role Cardinality.}
We evaluate R3DM with the number of role clusters $|M|$ ranging from 2 to 8 in the 3s5z\_vs\_3s6z environment. While the final performance remains statistically indistinguishable across configurations, training with 3 role clusters exhibits superior sample efficiency compared to higher cardinalities. This suggests that excessive role specialization (e.g., 8 clusters) introduces heterogeneity, whereas a moderate number (3 clusters) optimally balances coordination and specialization.

\textbf{Impact of Contrastive Learning.}
To isolate the contributions of contrastive learning (CL) and intrinsic rewards, we compare three variants: 1) R3DM without intrinsic rewards (equivalent to ACORM), 2) R3DM with intrinsic rewards but without CL, and 3) Full R3DM (intrinsic rewards + CL).
The variant without CL underperforms \methodname\ but still surpasses ACORM, demonstrating that intrinsic rewards, which are designed to maximize trajectory entropy conditioned on roles, drive significant performance gains. CL further enhances results by enforcing distinct role representations, reducing conditional entropy between trajectories and role assignments. This highlights the complementary impact of intrinsic rewards (exploration) and CL (representation disentanglement) in the framework.

\section{Limitations and Future Work}

While \methodname\ presents a key insight to improve role-based MARL, it has a few limitations. First, the number of roles in the environment needs to be set apriori, therefore, making it a hyperparameter. Future work can explore removing this as a requirement and instead dynamically derive roles from the replay buffer. Additionally, \methodname\ computes intrinsic rewards through a world model on the local observation dynamics that do not take into account the influence of other agents' actions or roles on the observations of the ego agent. Future work can be extended to incorporate more sophisticated dynamics models that would more accurately compute the influence of an ego-agent's role embedding on its future trajectory to learn better intrinsic rewards that would boost sample efficiency.

\section{Conclusion and Future Work}

To overcome the limitations of existing role-based MARL methods in fostering effective cooperation, we propose Role Discovery and Diversity through Dynamics Models (R3DM), a novel role-based MARL framework. \methodname\ introduces a mutual information-based objective that establishes a direct connection between agents’ roles, their observed trajectories, and their expected future behaviors. This enables the emergence of specialized and diverse behaviors while balancing exploration and role-specific specialization. \methodname\ optimizes the proposed objective through contrastive learning on past trajectories to first derive intermediate roles that subsequently shape intrinsic rewards to promote diversity in future behaviors across different roles through a learned dynamics model. Experimental evaluations on challenging benchmarks such as SMAC and SMACv2, show improvements in coordination capabilities, therefore improving test win rates and cumulative rewards. These results highlight the potential of our framework and pave the way to incorporate model-based RL into MARL algorithms.

\section*{Impact Statement}
Our work presents a new algorithm to advance the field of Multi-agent Reinforcement Learning. Beyond its potential impact on multi-player games and environments, we believe that the implications of our work would have minimal societal consequences.

\section*{Acknowledgement}
We would like to thank Po-han Li and Jiaxun Cui for their helpful feedback on improving the paper.
This work was supported in part by the Honda Research Institute, National Science Foundation Grants No. 2133481 and No. 2148186, and Cisco Systems, Inc., under MRA MAG00000005, through the 6G@UT center within the Wireless Networking
and Communications Group (WNCG) at the University of Texas at Austin. Any opinions, findings,
and conclusions or recommendations expressed in this material are those of the authors and do not
necessarily reflect the views of the National Science Foundation.


\bibliography{example_paper}
\bibliographystyle{icml2025}

\newpage
\appendix
\onecolumn
\section{Proofs}
\subsection{Theorem 4.1}
\label{app:proof_4_1}
\textbf{Theorem \ref{thm:4.1}}
Given a set of roles $M$ with cardinality $|M|$, a role $\rolet \in M$, and a concatenated observation-action trajectory $\trajt{t+k}$ which comprises its observation-action history $\trajt{t}$ and future trajectory $\trajt{t+1:t+k}$ with $k$ steps, if $\agentembt{t} = f_{\historyem} \left(\trajt{t}\right)$ denotes the embedding of the observation-action history obtained through a network $\historyem$, and $\rolerept{t} \sim f_{\roleenc}(\rolerept{t} \mid \agentembt{t})$ denotes role embeddings obtained from a network $\roleenc$, then  
\begin{align}
    I(\trajt{t+k};\rolet \mid \trajt{t}) \geq    
          \mathbb{E}_{\agentembt{t}, \rolerept{t}, \rolet} \left[\log \left( \frac{p(\rolerept{t}\mid\agentembt{t})}{p(\rolerept{t})} \right) \right]
      +  I(\trajt{t+1:t+k}; \rolerept{t} \mid \trajt{t}),   
\label{eq:obj_simp_app} 
\end{align}
where $I(\trajt{t+1:t+k}; \rolerept{t})$ is the MI between the future trajectory $\trajt{t+1:t+k}$ and role embedding $\rolerept{t}$ at time $t$.

\begin{proof}

We begin with the mutual information objective  
\begin{equation}
    I(\trajt{t+k};\rolet \mid \trajt{t}) = \mathbb{E}_{\trajt{t+k},\rolet} \left[ \log \frac{p(\trajt{t+k}\mid \trajt{t},\rolet)}{p(\trajt{t+k}\mid \trajt{t})}\right],
    \label{eq:mi_proof}
\end{equation}
where the trajectory $\trajt{t+k}$ is a concatenation of the observation-action history and future trajectory of an agent $i$.
Maximizing this objective is intractable as the role $\rolet$ is unknown, therefore, we obtain a lower bound of this objective by learning a role encoder $\roleenc$ to obtain an intermediate role representation $\rolerept{t}$ from embeddings of an agent's observation-action history $\agentembt{t}$ through an encoder $\historyem$.

We then make the following simplifications to the ratio $\frac{p(\trajt{t+k}|\rolet)}{p(\trajt{t+k})}$:
\begin{align*}
    \frac{p(\trajt{t+k}|\rolet,\trajt{t})}{p(\trajt{t+k}|\trajt{t})} &= \frac{\int_{\agentembt{t},\rolerept{t}}p(\trajt{t+k} \mid \rolerept{t}, \agentembt{t}, \rolet, \trajt{t}) p(\rolerept{t} \mid \rolet, \agentembt{t}, \trajt{t}) p(\agentembt{t} \mid \trajt{t}, \rolet) d\agentembt{t} d\rolerept{t}}{p(\trajt{t+k} \mid \trajt{t})}\\
     &= \frac{\int_{\agentembt{t},\rolerept{t}}p(\trajt{t+k} \mid \rolerept{t}, \agentembt{t}, \rolet, \trajt{t}) p(\rolerept{t} \mid \rolet, \agentembt{t}, \trajt{t}) p(\agentembt{t} \mid \trajt{t}, \rolet) d\agentembt{t} d\rolerept{t}}{p(\trajt{t+k} \mid \trajt{t})}\\  
\end{align*}

Since, we assume that $\agentembt{t}$ is a sufficient statistic for determining $\rolerept{t}$, and $\trajt{t}$ is a sufficient statistic to determine $\agentembt{t}$, and assume that the conditional $p(\agentembt{t} \mid \trajt{t}) \approx p(\agentembt{t})$, we can simplify as follows: 
\begin{align*}
   \frac{p(\trajt{t+k}|\rolet,\trajt{t})}{p(\trajt{t+k}|\trajt{t})} &= \frac{\int_{\agentembt{t},\rolerept{t}}p(\trajt{t+k} \mid \rolerept{t}, \trajt{t}) p(\rolerept{t} \mid \agentembt{t}) p(\agentembt{t} \mid \trajt{t}) d\agentembt{t} d\rolerept{t}}{p(\trajt{t+k} \mid \trajt{t})}\\ 
     &\approx \mathbb{E}_{\agentembt{t},\rolerept{t}} \frac {p(\trajt{t+k} \mid \rolerept{t}, \trajt{t}) p(\rolerept{t} \mid \agentembt{t})}{p(\trajt{t+k} \mid \trajt{t})p(\rolerept{t})}  \\
     &= \mathbb{E}_{\agentembt{t},\rolerept{t}} \frac {p(\trajt{t+1:t+k} \mid \rolerept{t}) p(\rolerept{t} \mid \agentembt{t})}{p(\trajt{t+1:t+k}\mid \trajt{t})p(\rolerept{t})},
\end{align*}

where $\trajt{t+1:t+k}$ denotes the $k$ steps of the agent's future trajectory that comprises future observations and future actions.

Therefore, the mutual information introduced in Eq. \ref{eq:mi_proof} between $\trajt{t+k}$ and role $\rolet$ is 
\begin{equation*}
    I(\trajt{t+k};\rolet\mid \trajt{t}) \approx \mathbb{E}_{\trajt{t+k},\rolet} \left[ \log \left( \mathbb{E}_{\agentembt{t},\rolerept{t}} \frac {p(\trajt{t+1:t+k} \mid \trajt{t}, \rolerept{t}) p(\rolerept{t} \mid \agentembt{t})}{p(\trajt{t+1:t+k}\mid \trajt{t})p(\rolerept{t})} \right)  \right].
\end{equation*}

Leveraging Jensen's inequality, we obtain a lower bound for this mutual information as follows:  
\begin{align*}
    I(\trajt{t+k};\rolet \mid \trajt{t}) &\approx \mathbb{E}_{\trajt{t+k},\rolet} \left[ \log \left( \mathbb{E}_{\agentembt{t},\rolerept{t}} \frac {p(\trajt{t+1:t+k} \mid \trajt{t}, \rolerept{t})p(\rolerept{t} \mid \agentembt{t})}{p(\trajt{t+1:t+k})p(\rolerept{t})} \right)  \right]\\
    &\geq \mathbb{E}_{\trajt{t+k},\rolet,\agentembt{t},\rolerept{t}} \left[ \log \frac {p(\trajt{t+1:t+k} \mid \trajt{t}, \rolerept{t})p(\rolerept{t} \mid \agentembt{t})}{p(\trajt{t+1:t+k} \mid \trajt{t})p(\rolerept{t})} \right]\\
    &= \mathbb{E}_{\trajt{t+k},\rolerept{t}, \rolet} \left[ \log \frac {p(\trajt{t+1:t+k} \mid \trajt{t}, \rolerept{t})}{p(\trajt{t+1:t+k}\mid \trajt{t})} \right] + \mathbb{E}_{\rolerept{t},\agentembt{t}, \rolet} \left[\log \frac{p(\rolerept{t} \mid \agentembt{t})}{p(\rolerept{t})}\right]\\
    &=  I(\trajt{t+1:t+k};\rolerept{t}\mid \trajt{t}) + \mathbb{E}_{\rolerept{t},\agentembt{t},\rolet} \left[\log \frac{p(\rolerept{t} \mid \agentembt{t})}{p(\rolerept{t})}\right].
\end{align*}

This completes the proof, where we obtain a tractable lower bound of the original MI objective conditioned on the unknown role variable $\rolet$. By optimizing this lower bound, we can optimize the original objective.

\end{proof}

\subsection{Theorem 4.2}
\label{app:proof_4_2}

\textbf{Theorem \ref{thm:infonce}}

    Let M denote a set of roles, and $\rolet \in M$ denote a role. If $\agentembt{t} = f_{\historyem} \left(\trajt{t}\right)$ is an embedding of the observation-action history through a network $\historyem$, and $\rolerept{t} \sim f_{\roleenc}(\rolerept{t} \mid \agentembt{t})$ is a role-embedding from network $\roleenc$, then  
    \begin{align}
        \mathbb{E}_{\agentembt{t},\rolerept{t},\rolet}
         \log \left(\frac{p(\rolerept{t}\mid \agentembt{t})}{p(\rolerept{t})}\right) \geq  \log |M| + \nonumber \mathbb{E}_{\agentembt{t}, \rolerept{t}, \rolet}\left[\log{\frac{g(\rolerept{t}, \agentembt{t})}{ g(\rolerept{t}, \agentembt{t}) +  \sum_{ m_i^{t*} \in M/\rolet} g(\rolerept{t}, \agentembt{t*})}}\right],
    \end{align}
   where $g(\rolerept{t}, \agentembt{t})$ is a function whose optimal value is proportional to $\frac{p(\rolerept{t}\mid \agentembt{t})}{p(\rolerept{t})}$, $m_i^{t*}$ is a role from $M$, and $\agentembt{t*}$ is its corresponding observation-action history embedding.

\begin{proof}

To prove this theorem, we need to first show that the optimal value of the score function $g(\rolerept{t}, \agentembt{t})$ is proportional to the ratio $\frac{p(\rolerept{t}\mid \agentembt{t})}{p(\rolerept{t})}$ which describes the ratio between the conditional likelihood of obtaining the role embedding from the agent embedding and the marginal likelihood of the role embeddings.

Therefore, we begin the proof by proving the following lemma.
\begin{lemma}

The value of the function $g(\rolerept{t}, \agentembt{t})$ that minimizes  the following term 
\begin{equation}
    - \mathbb{E}_{\agentembt{t}, \rolerept{t}, \rolet}\left[\log{\frac{g(\rolerept{t}, \agentembt{t})}{ g(\rolerept{t}, \agentembt{t}) +  \sum_{ m_i^{t*} \in M/\rolet} g(\rolerept{t}, \agentembt{t*})}}\right]
    \label{eq:proof_4_2_loss}
\end{equation}
is proportional to $ \frac{p(\rolerept{t}\mid \agentembt{t})}{p(\rolerept{t})}$, where $m_i^{t*}$ is a role from $M$ and $\agentembt{t*}$ is its corresponding observation-action history embedding. 
\end{lemma}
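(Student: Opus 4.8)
The plan is to recognize the quantity in Eq.~\ref{eq:proof_4_2_loss} as the InfoNCE objective of \citet{oord2018representation} and to reproduce its optimal-critic argument. The key observation is that the fraction inside the logarithm has the form of a softmax over $|M|$ candidates, so the whole expression is the categorical cross-entropy of a classifier that must identify the single ``positive'' agent embedding $\agentembt{t}$ --- the one actually associated with the role embedding $\rolerept{t}$ --- among one positive and $|M|-1$ negatives $\agentembt{t*}$. Since a cross-entropy loss is minimized precisely when the predicted class distribution equals the true posterior over the positive index, it suffices to compute that posterior in closed form and then read off the functional form of $g$ that reproduces it inside the softmax.

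First I would make the generative process explicit: the positive embedding $\agentembt{t}$ is the one that produced $\rolerept{t}$ through the encoder $\roleenc$ (hence, given $\rolerept{t}$, it is distributed according to the conditional $p(\agentembt{t}\mid\rolerept{t})$), while each negative $\agentembt{t*}$ is drawn independently from the marginal $p(\agentembt{t*})$, with a uniform prior over which candidate is the positive. Applying Bayes' rule to the event that a given candidate is the positive, the joint density factorizes into one conditional term times a product of marginals over the remaining candidates. That product of marginals is common to the numerator and to every term of the denominator, so dividing through by $\prod p(\cdot)$ cancels it and leaves
\[
  p\bigl(\agentembt{t}\text{ is positive}\mid\rolerept{t},\{\agentembt{t*}\}\bigr)
  = \frac{\dfrac{p(\agentembt{t}\mid\rolerept{t})}{p(\agentembt{t})}}
         {\dfrac{p(\agentembt{t}\mid\rolerept{t})}{p(\agentembt{t})} + \displaystyle\sum_{m_i^{t*}\in M/\rolet}\dfrac{p(\agentembt{t*}\mid\rolerept{t})}{p(\agentembt{t*})}}.
\]

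Comparing this posterior term by term with the softmax formed by the scores $g$ in Eq.~\ref{eq:proof_4_2_loss}, the minimizer must satisfy $g(\rolerept{t},\agentembt{t})\propto \frac{p(\agentembt{t}\mid\rolerept{t})}{p(\agentembt{t})}$, where the proportionality constant may depend on $\rolerept{t}$ but not on which candidate is being scored, and hence cancels inside the softmax. Invoking Bayes' rule once more gives $\frac{p(\agentembt{t}\mid\rolerept{t})}{p(\agentembt{t})} = \frac{p(\rolerept{t}\mid\agentembt{t})}{p(\rolerept{t})}$, which is exactly the ratio claimed in the lemma. The main obstacle is making this Bayes step genuinely rigorous rather than heuristic: I must state the one-positive/many-negatives sampling assumption and the uniform index prior precisely enough that the product-of-marginals cancellation is exact, and argue that it is the population cross-entropy (not a finite-sample estimate) whose minimizer is the true posterior; once that is pinned down, the algebraic cancellation and the density-ratio identification are routine.
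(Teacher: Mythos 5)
Your proof is correct and follows essentially the same route as the paper's: both identify the expression as the InfoNCE categorical cross-entropy whose population minimizer is the true posterior over the positive candidate, factor the joint into one conditional term times a product of marginals that cancels, and read off $g(\rolerept{t},\agentembt{t})\propto p(\rolerept{t}\mid\agentembt{t})/p(\rolerept{t})$. The only cosmetic difference is that you classify which agent embedding is the positive given $\rolerept{t}$, whereas the paper classifies which role is taken given $\agentembt{t}$ and the set $Z$ of role embeddings; the two posteriors coincide by the Bayes symmetry $p(\agentembt{t}\mid\rolerept{t})/p(\agentembt{t})=p(\rolerept{t}\mid\agentembt{t})/p(\rolerept{t})$.
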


\begin{proof}

  Following the proof in ~\citep{oord2018representation}, it can be seen that the function in Eq. \ref{eq:proof_4_2_loss} is akin to the categorical loss of a model or a function $g(\rolerept{t}, \agentembt{t})$. Therefore, $g(\rolerept{t}, \agentembt{t})$ is the optimal probability $p(i=\rolet | \agentembt{t}, Z)$ that the agent $i$ takes up the role $\rolet$ based on the agent embedding $\agentembt{t}$ and the set of all role embeddings $Z=\{\rolerept{t}\mid i\in I\}$ for all agents in $I$ .  

  We can write the optimal probability $p(i=\rolet | \agentembt{t}, Z)$ using Bayes Theorem as:

  \begin{align*}
      p(i=\rolet \mid \agentembt{t}, Z) &= \frac{p(Z \mid \rolet, \agentembt{t})}{\sum_{\rolet} p(Z \mid \rolet, \agentembt{t})}\\
      &= \frac{\prod_{j \in I/i} \left(p(z^t_j) \right) p(\rolerept{t} \mid \agentembt{t})}{\sum_{\rolet} \prod_{j \in I/i} \left(p(z^t_j) \right) p(\rolerept{t} \mid \agentembt{t})} \;\;\;\; && \text{(Since $p(\rolerept{t} \mid \rolet, \agentembt{t})$ = $p(\rolerept{t} \mid \agentembt{t})$)}\\
      &= \frac{ \frac{p(\rolerept{t} \mid \agentembt{t})}{p(\rolerept{t})}}{\sum_{\rolet} \frac{p(\rolerept{t}\mid \agentembt{t})}{p(\rolerept{t})}}\\
      &\propto \frac{p(\rolerept{t} \mid \agentembt{t})}{p(\rolerept{t})}.
  \end{align*}

  Since the function $g(\rolerept{t}, \agentembt{t})$ is equivalent to the optimal probability $p(i=\rolet \mid \agentembt{t}, Z)$,  it is proportional to $\frac{p(\rolerept{t} \mid \agentembt{t})}{p(\rolerept{t})}$. This concludes the proof of the Lemma.
\end{proof}

Substituting the optimal value of the score function in Eq. \ref{eq:proof_4_2_loss}, we get

\begin{align*}
     &=- \mathbb{E}_{\agentembt{t}, \rolerept{t}, \rolet}\left[\log{\frac{\frac{p(\rolerept{t} \mid \agentembt{t})}{p(\rolerept{t})}}{ \frac{p(\rolerept{t} \mid \agentembt{t})}{p(\rolerept{t})} +  \sum_{ m_i^{t*} \in M/\rolet} \frac{p(\rolerept{t} \mid \agentembt{t*})}{p(\rolerept{t})}}}\right]\\
     &= \mathbb{E}_{\agentembt{t}, \rolerept{t}, \rolet} \left[1  + \frac{p(\rolerept{t})}{p(\rolerept{t} \mid \agentembt{t})} \sum_{ m_i^{t*} \in M/\rolet} \frac{p(\rolerept{t} \mid \agentembt{t*})}{p(\rolerept{t})}\right] \\
     &\approx \mathbb{E}_{\agentembt{t}, \rolerept{t}, \rolet} \left[ \log \left( 1  + (|M|-1) \frac{p(\rolerept{t})}{p(\rolerept{t} \mid \agentembt{t})}  \mathbb{E}_{ m_i^{t*} \in M/\rolet} \frac{p(\rolerept{t} \mid \agentembt{t*})}{p(\rolerept{t})} \right)\right]\\
     &\approx \mathbb{E}_{\agentembt{t}, \rolerept{t}, \rolet} \left[ \log \left(1  + (|M|-1) \frac{p(\rolerept{t})}{p(\rolerept{t} \mid \agentembt{t})}\right] \right) \\
     &\geq \mathbb{E}_{\agentembt{t}, \rolerept{t}, \rolet} \left[ \log \left(|M| \frac{p(\rolerept{t})}{p(\rolerept{t} \mid \agentembt{t})}\right]\right) \\
     &= \log |M| - \mathbb{E}_{\agentembt{t}, \rolerept{t}, \rolet} \left[ \log \frac{p(\rolerept{t} \mid \agentembt{t})} {p(\rolerept{t})} \right].
\end{align*}

Therefore, we get the expression
\begin{equation*}
     - \mathbb{E}_{\agentembt{t}, \rolerept{t}, \rolet}\left[\log{\frac{g(\rolerept{t}, \agentembt{t})}{ g(\rolerept{t}, \agentembt{t}) +  \sum_{ \rolet* \in M/\rolet} g(\rolerept{t}, \agentembt{t*})}}\right] \geq \log |M| - \mathbb{E}_{\agentembt{t}, \rolerept{t}, \rolet} \left[ \log \frac{p(\rolerept{t} \mid \agentembt{t})} {p(\rolerept{t})} \right].
\end{equation*}

or 
\begin{equation*}
       \mathbb{E}_{\agentembt{t}, \rolerept{t}, \rolet} \left[ \log \frac{p(\rolerept{t} \mid \agentembt{t})} {p(\rolerept{t})} \right]\geq \log |M|   + \mathbb{E}_{\agentembt{t}, \rolerept{t}, \rolet}\left[\log{\frac{g(\rolerept{t}, \agentembt{t})}{ g(\rolerept{t}, \agentembt{t}) +  \sum_{ m_i^{t*} \in M/\rolet} g(\rolerept{t}, \agentembt{t*})}}\right]
\end{equation*}
This concludes the proof.
\end{proof}

\subsection{Theorem 4.3}
\label{app:proof_4_3}
\label{proof:4.3}
\textbf{Theorem \ref{thm:intrinsic} }
         Given a local trajectory $\trajt{t} = \{o_0, a_0, o_1, a_1, \dots, o_t\}$ and a learned role representation $\rolerept{t}$ for an agent, the MI between the future trajectory $\trajt{t+1:t+k}$ and the role representation $\rolerept{t}$ can be expressed as
\begin{align}
    I(\trajt{t+1:t+k};\rolerept{t} \mid \trajt{t}) & \nonumber = \mathbb{E}_{\trajt{t},\rolerept{t}} \left[ \sum_{l=t}^{t+k-1}  \log \left(\frac { p(\actt{l}\mid \trajt{l}, \rolerept{t})}{p(\actt{l}\mid \trajt{l})}\right)  + \sum_{l=t}^{t+k-1}  \log \left(\frac { p(\obst{l+1} \mid \trajt{l}, \rolerept{t}, \actt{l})}{p(\obst{l+1} \mid  \trajt{l}, \actt{l})}\right) \right],
    \label{eq:intrinsic_app}
\end{align}
where $p(\actt{l}\mid \trajt{l}, \rolerept{t})$ is the probability of taking action $\actt{l}$ given the trajectory $\trajt{l}$ and role representation $\rolerept{t}$, and $p(\obst{l+1} \mid \trajt{l}, \rolerept{t}, \actt{l})$ is the probability of observing $\obst{l+1}$ given the trajectory $\trajt{l}$, action $\actt{l}$, and role representation $\rolerept{t}$.
\begin{proof}

We leverage the state-action dynamics of the MDP to formulate this proof. We use the fact that the probability of the future expected trajectory $\trajt{t+1:t+k}$ is the probability of the concatenated trajectory $\trajt{t+k}$ conditioned on the observation-action history $\trajt{t}$, i.e. $p(\trajt{t+1:t+k}) = p(\trajt{t+k} \mid \trajt{t})$.

The conditional probability can be expanded through the Markov chain in the MDP to obtain the probability of the trajectory in terms of the future actions and observations. We write $p(\trajt{t+k} \mid \trajt{t}) = \prod_{l=t}^{t+k-1} p(\actt{l} \mid \trajt{l}) p(\obst{l+1} \mid \trajt{l}, \actt{l}) $. Likewise, the conditional probability of the future trajectory given the role embedding $\rolerept{t}$ is $p(\trajt{t+k} \mid \trajt{t}, \rolerept{t}) = \prod_{l=t}^{t+k-1} p(\actt{l} \mid \trajt{l}, \rolerept{t}) p(\obst{l+1} \mid \trajt{l}, \actt{l}, \rolerept{t})$. It follows that the MI between the future trajectories and the role representations can be written as:

\begin{align*}
     I(\trajt{t+1:t+k};\rolerept{t}\mid \trajt{t}) &= \mathbb{E}_{\trajt{t+k},\rolerept{t}} \left[ \log \frac {p(\trajt{t+1:t+k} \mid \trajt{t}, \rolerept{t})}{p(\trajt{t+1:t+k}\mid \trajt{t})} \right] \\
     &= \mathbb{E}_{\trajt{t+k},\rolerept{t}} \left[ \log \frac{\prod_{l=t}^{t+k-1} p(\actt{l} \mid \trajt{l}, \rolerept{t}) p(\obst{l+1} \mid \trajt{l}, \actt{l}, \rolerept{t})}{\prod_{l=t}^{t+k-1} p(\actt{l} \mid \trajt{l}) p(\obst{l+1} \mid \trajt{l}, \actt{l})} \right]\\
     &= \mathbb{E}_{\trajt{t+k},\rolerept{t}} \left[ \sum_{l=t}^{t+k-1} \left( \log \left(\frac { p(\actt{l}\mid \trajt{l}, \rolerept{t})}{p(\actt{l}\mid \trajt{l})}\right)  +  \log \left(\frac { p(\obst{l+1} \mid \trajt{l}, \rolerept{t}, \actt{l})}{p(\obst{l+1} \mid  \trajt{l}, \actt{l})}\right)\right) \right]
\end{align*}
This concludes the proof.
\end{proof}
\newpage
\section{Implementation Details}
\label{app:implementation}
\subsection{Contrastive Learning}
\label{app:cl_implementation}
We utilize the implementation of ACORM~\citep{hu2024acorm} to obtain the intermediate role embeddings necessary for the computation of the intrinsic rewards. First, the embeddings of the agent's observation-action history $\agentembt{t}$ are periodically clustered into $|M|$ groups using K-means, where each $\agentembt{t}$ is assigned to a group $C_{ij}= \{0, 1\}$, where $i \in I$ and $j \in \{0, ...,|M|-1\}$. Using the cluster assignments $C$, \textit{positive pairs} are generated from agent embeddings ($\agentembt{t}$) and role representations ($\rolerept{t}$) within the same cluster, while \textit{negative pairs} are formed from embeddings across different clusters. More formally, for an agent $i$ in cluster $q$ ($C_{iq} = 1$), positive keys are $k^+ = \{f_{\roleenc'}(\agentembjt{j}{t}) : j \in \oplus\}$, where $\oplus = [j : C_{jq} = 1]$, while negative keys are $k^- = \{f_{\roleenc'}(\agentembjt{j}{t}) : j \notin \oplus\}$.

We use bilinear products \cite{laskin2020curl, hu2024acorm} to compute similarities between role embeddings and agent embeddings, respectively, to formulate a score function $g(\rolerept{t}, \agentembjt{j}{t}) = \exp(\rolerept{t}  W f_{\roleenc'}(\agentembjt{j}{t}))$,
where $W$ is a learnable parameter matrix. This score function measures the similarity between the query role representation $\rolerept{t}$ of agent $i$ and a key $k \in k^+$ belonging to the same cluster. Finally, the overall loss function is given by
\begin{equation}
    \mathcal{L} = -\log{\frac{\sum_{k \in k^+} \exp(\rolerept{t}  W k)}{\sum_{k \in k^+}\exp(\rolerept{t}  W k) + \sum_{k \in k^-} \exp(\rolerept{t}  W k)}},
    \label{eq:cont_loss}
\end{equation}
where the network parameter $\roleenc'$ is updated by MOCO \cite{he2020momentum} through network parameters $\roleenc$, i.e, $\roleenc' = (1- \zeta) \roleenc + \zeta \roleenc'$, where $\zeta$ is the momentum hyperparameter.

\subsection{Networks}

Our implementation details for the critic and role networks are similar to those in ACORM, while we utilize the implementation of DreamerV3 for the RSSM dynamics models to compute intrinsic rewards. We employ simple network architectures for the trajectory encoder, role encoder, and attention mechanism. The trajectory encoder consists of a fully connected multi-layer perceptron (MLP) and a GRU network with ReLU activation, which encodes an agent's trajectory into a 128-dimensional embedding vector. The role encoder is a fully connected MLP that converts the 128-dimensional trajectory embedding into a 64-dimensional role representation. For the mixing network, we follow the same configuration as \cite{rashid2020qmix}, which includes two hidden layers with 32 dimensions each and ReLU activation. Additionally, we follow the implementation of the mixing network in ACORM, which computes the attention weights between the embeddings of the global state trajectory and the individual role embedding representations. These attention weights are used to compute the mixing weights for all agents' utility functions in conjunction with the global state. For this mixing subnetwork, the dimension of the embedding of the global state trajectory is equal to the state embedding, which is 64.  Further details of these network structures are summarized in Table \ref{table:network_config}.

The learning hyperparameters are consistent with those of ACORM, where we use the Adam optimizer with a learning rate of $6 \times 10^{-4}$. Exploration is conducted using an $\epsilon$-greedy strategy, where 
$\epsilon$ is linearly reduced from 1.0 to 0.02 over 80,000 timesteps and remains constant afterward. Episodes collected through online interactions are stored in a replay buffer with a capacity of 5,000 state transitions, and Q-networks are updated using batches of 32 episodes sampled from this buffer. The target Q-network is updated using a soft update strategy with a momentum coefficient of 0.005. Additionally, the contrastive learning loss is optimized jointly for every 100 Q-network updates.
All the learned decentralized policies are evaluated every 5,000 updates using 32 generated episodes. For most runs, we set the number of clusters to 3. We detail all hyperparameters in Table \ref{table:training_hyperparams}, and we outline any deviations from the standard used hyperparameters within the following table.

\begin{table}[h]
    \caption{The network configurations used for ACORM based on QMIX}
    \label{table:network_config}
    \centering
    \begin{tabular}{l c | l c}
    \toprule
    \textbf{Network Configurations} & \textbf{Value} & \textbf{Network Configurations} & \textbf{Value}\\
    \midrule
    role representation dim & 64  & hypernetwork hidden dim & 32   \\
    agent embedding dim     & 128 & hypernetwork layers num & 2    \\
    state embedding dim     & 64  & type of optimizer       & Adam \\
    attention output dim    & 64  & activation function     & ReLU \\
    attention head num      & 4   & add last action         & True \\
    attention embedding dim & 128 &                         &      \\
    \bottomrule
    \end{tabular}
\end{table}

\begin{table}[h]
    \centering
    \caption{Hyperparameters used for ACORM under SMAC and SMAC-V2.}
    \label{table:training_hyperparams}
    \begin{tabular}{l c c}
    \toprule
    \textbf{Hyperparameter} & \textbf{SMAC} & \textbf{SMAC-V2} \\
    \midrule
    buffer size                           & 5000               & 5000 \\
    batch size                            & 32, 64 for 6h\_vs\_8z                & 32, 64 for protoss\_10\_vs\_11, terran\_10\_vs\_11, zerg\_10\_vs\_11 \\
    learning rate                         & $6 \times 10^{-4}$ & $6 \times 10^{-4}$ \\ 
    use learning rate decay               & True               & True \\
    contrastive learning rate             & $8 \times 10^{-4}$ & $8 \times 10^{-4}$ \\
    momentum coefficient $\beta$          & 0.005              & 0.005 \\
    update contrastive loss interval $T_{cl}$ & 100            & 100 \\
    start epsilon $\epsilon_s$            & 1.0                & 1.0  \\
    finish epsilon $\epsilon_f$           & 0.02               & 0.02, 0.00 for zerg\_5\_vs\_5, 0.001 for terran\_10\_vs\_11\\
    $\epsilon$ decay steps                & 80000              & 80000, 100000 for zerg\_5\_vs\_5 \\
    evaluate interval                     & 5000               & 5000 \\
    evaluate times                        & 32                 & 32   \\
    target update interval                & 200                & 200  \\
    discount factor $\gamma$             & 0.99               & 0.99 \\
    cluster num                           & 3, 5 for 5m\_vs\_6m                 & 3, 4 for terran\_10\_vs\_11   \\
    \bottomrule
    \end{tabular}
\end{table}

\subsection{Intrinsic Reward}

The complete intrinsic reward is given by the following equation:

\begin{align*}
     r^{t}_{i,\mathrm{int}} = &\beta_3  \left(  \sum_{l=t}^{t+k-1} \mathbb{D}_{KL} \left( \text{SoftMax}\left( Q_i\left( \cdot \mid \trajt{l}, \rolerept{t}; \phi_Q \right) \right) \mid\mid p\left(\cdot \mid \trajt{l}\right) \right) \right) 
     \\&+
     \sum_{l=t}^{t+k-1} \beta_{1}\Bigl(
          \log \,q_{\psi_{\mathrm{dec}}}\bigl(\obst{l+1} \mid h_i^{l+1},\, d_i^{l+1},\, z_i^{t}\bigr) +
          \beta_{2}\,\log \,q_{\psi_{\mathrm{dyn}}}\bigl(d_i^{l+1} \mid h_i^{l+1},\, z_i^{t}\bigr)
        \Bigr)
         \\&-
    \sum_{l=t}^{t+k-1}
        \Bigl(
          \log \,p_{\phi_{\mathrm{dec}}}\bigl(\obst{l+1} \mid h_i^{l+1},\, d_i^{l+1}\bigr) +
          \beta_{2}\,\log \,p_{\phi_{\mathrm{dyn}}}\bigl(d_i^{l+1} \mid h_i^{l+1}\bigr)
        \Bigr),
\end{align*}

where $q_{\psi_{\mathrm{dyn}}}(d_i^t \mid h_i^t, \rolerept{t})$ with parameters $\psi_{\mathrm{dyn}}$ is the role-conditioned latent dynamics predictor of the RSSM dynamics model in DreamerV3, $p_{\phi_{\mathrm{dyn}}}(d_i^t \mid h_i^t)$ with parameters $\phi_{\mathrm{dyn}}$ is the role-agnostic latent dynamics predictor of the RSSM model, $q_{\psi_{\mathrm{dec}}}(\obst{t} \mid h_i^t, d_i^t, \rolerept{t})$ is the role-conditioned dynamics decoder of the model with parameters $\psi_{\mathrm{dec}}$, and $p_{\phi_{\mathrm{dec}}}(\obst{t} \mid h_i^t, d_i^t)$ is the role-agnostic dynamics decoder of the RSSM model with parameters $\phi_{\mathrm{dec}}$. We detail the components and the training of both the models in the subsequent section. Additionally, $Q_i\left( \cdot \mid \trajt{l}, \rolerept{t}; \phi_Q \right)$ are the agent's utility functions conditioned on the trajectory, and we set $p\left(\cdot \mid \trajt{l}\right)$ as

\begin{equation}
    p\left(\cdot \mid \trajt{l}\right) = \sum_{j \in I} \frac{1}{N} \text{SoftMax} \left(Q_i\left( \cdot \mid \trajt{l}, z_j^t; \phi_Q \right) \right),
\end{equation}
where $I$ is the set of $N$ agents.
To simplify the hyperparameter tuning, we set $\beta_2$ to 0 and we set the decision horizon length $l$ to 1 to minimize training time. Note that the parameter $\beta_2$ controls the significance of the log-likelihood of the future latent state prediction given the current trajectory and role representation.

\begin{table}[h]
\centering
\caption{Intrinsic Reward Hyperparameters.}
\begin{tabular}{llcccc}
\toprule
\textbf{Environment} & \textbf{Map} & \(\alpha\) & \(\beta_1\) & \(\beta_2\) & \(\beta_3\) \\
\midrule
        & 5m\_vs\_6m & 0.05 & 2.0 & 0.0 & 1.0  \\
        & MMM2       & 0.05 & 2.0 & 0.0 & 1.0   \\
        & 3s5z\_vs\_3s6z & 0.1 & 1.0 & 0.0 & 1.0   \\
SMAC    & 27m\_vs\_30m   & 0.1 & 1.0 &0.0 &  1.0   \\
        & Corridor   & 0.1 & 1.0 & 0.0 &  2.0  \\
        & 6h\_vs\_8z   & 0.10 & 0.9 & 0.0 &  2.0   \\

        \midrule
        & protoss\_5\_vs\_5 & 0.05 & 0.5  & 0.0 &  1.0   \\
        & terran\_5\_vs\_5 & 0.05 & 1.0 &0.0 &  1.0   \\
        & zerg\_5\_vs\_5 & 0.05 & 0.2 & 0.0 &  0.5   \\
SMAC-V2 & protoss\_10\_vs\_11 & 0.05 & 0.5 &0.0 &  0.5  \\
        & terran\_10\_vs\_11 & 0.05 & 0.5 &0.0 &  0.5   \\
        & zerg\_10\_vs\_11  & 0.05 & 0.2 & 0.0 & 1.0  \\
\bottomrule
\end{tabular}
\end{table}

\subsection{DreamerV3 Implementation Details}

We describe each of the architectural components and their corresponding network structures. The role-conditioned variant of the RSSM model utilized in Dreamer V3 \cite{hafner2023mastering} comprises of the following components:
\begin{enumerate}
    \item A sequence model $q_{\psi_{\mathrm{seq}}}(h_i^t \mid \trajt{t-1})$ with parameters $\psi_{\mathrm{seq}}$, encoding the trajectory $\trajt{t-1}$ into a hidden representation $h_i^t$. The sequence model is implemented as an MLP that first maps the previous hidden state $h_i^{t-1}$, action, and observation $o_i^{t-1}$ to hidden features of size 128. This is subsequently passed to a GRU with recurrent depth 1 and a hidden embedding size of 128 to yield the hidden state features. 
    \item An observation encoder $q_{\psi_e}(d_i^t \mid h_i^t, \obst{t})$ with parameters $\psi_e$, mapping observations $\obst{t}$ into a discrete latent representation $d_i^t$. The observation encoder comprises a 2-layer MLP with intermediate dimensions of size 128, to a 16-dimension latent vector where each dimension is discretized to 16 bins. 
    \item A dynamics predictor $q_{\psi_{\mathrm{dyn}}}(d_i^t \mid h_i^t, \rolerept{t})$ with parameters $\psi_{\mathrm{dyn}}$, predicting the latent representation $d_i^t$ from $h_i^t$. This is a single layer neural network.
    \item An observation decoder $q_{\psi_{\mathrm{dec}}}(\obst{t} \mid h_i^t, d_i^t, \rolerept{t})$ with parameters $\psi_{\mathrm{dec}}$, reconstructing observations $\obst{t}$. This is  2-layer MLP network that maps the latent representation and hidden representation back to the observations. The intermediate dimension size here is 128.

\end{enumerate}

We utilize the DreamerV3 torch implementation\footnote{\url{https://github.com/NM512/dreamerv3-torch}} and refer readers to their repository for more details on the implementation. Here, we describe the hyperparameters used in our training setup for the DreamerV3 RSSM model. We encourage the reader to refer to the DreamerV3 paper \cite{hafner2023mastering} to understand the  definitions of the hyperparameters.

\begin{table}[t]
    \caption{Network and Training Hyperparameters for Dreamer V3 Implementation}
    \label{table:dreamer_config}
    \centering
    \begin{tabular}{l c | l c}
    \toprule
    \textbf{Parameter} & \textbf{Value} & \textbf{Parameter} & \textbf{Value} \\
    \midrule
    Hidden Size & 128 & \multirow{3}{*}{Batch Size} & 16 default  \\ 
     & & & 32 for SMACv2 5\_vs\_5 suite\\ 
     & & & 64 for SMACv2 10\_vs\_11 suite and SMAC 6h\_vs\_8z \\
    Deterministic Hidden Size & 128 & Max Batch Length & Max Environment Steps  \\
    Dynamics Stochastic & 16 & Model Learning Rate & 1e-4 \\
    Dynamics Latent Discretization & 16 & Dataset Buffer Size & 5000 \\
    Encoder MLP Units & 128 & Initial Latent  & `Learned' \\
    Encoder MLP Layers & 2 &  Weight Decay & 0.0 \\
    Decoder MLP Units & 128 & KL Nats & 512 \\
    Decoder MLP Layers & 2 & Reconstruction Loss Scale & 1.0 \\
    Activation & SiLU & Dynamics Loss Scale & 0.5 \\
    Optimizer & Adam & Representation Loss Scale & 0.1 \\
    Gradient Clip & 1000 & Latent Unimix Ratio & 0.01 \\
    \bottomrule
    \end{tabular}
\end{table}

\subsection{Derivation of the Intrinsic Rewards}
\label{app:intrinsic_clar}
We obtain a lower bound for this ratio $\sum_{l=t}^{t+k-1} \log \left(\frac{p(\actt{l} \mid \trajt{l}, \rolerept{t})}{p(\actt{l} \mid \trajt{l})}\right)$ in Eq. \ref{eq:intrinsic} by leveraging the non-negativity of the KL divergence $
\mathbb{D}_{KL}\left( \text{SoftMax}\left( Q_i\left( \cdot \mid \trajt{l}, \rolerept{t}; \phi_Q \right) \right) \mid\mid p\left(\cdot \mid \trajt{l}, \rolerept{t}\right) \right)
$.
We outline how the lower bound is obtained:
\begin{align}
       \mathbb{E}_{\rolerept{t}, \trajt{l}, \actt{l}} \left[ \log \left(\frac { p(\actt{l}\mid \trajt{l}, \rolerept{t})}{p(\actt{l}\mid \trajt{l})}\right) \right] \nonumber \geq 
        \mathbb{E}_{\rolerept{t}, \trajt{l}, \actt{l}} \left[ \log \left(\frac {\text{SoftMax}\left( Q_i\left ( \actt{l} \mid \tau_i^l, \rolerept{t} ; \phi_Q \right) \right) }{p(\actt{l} \mid \trajt{l})}\right) \right].
\end{align}

We begin with the ratio $\mathbb{E}_{\rolerept{t}, \trajt{l}, \actt{l}} \left[ \log \left(\frac { p(\actt{l}\mid \trajt{l}, \rolerept{t})}{p(\actt{l}\mid \trajt{l})}\right) \right]$ and write it as the following :
\begin{align*}
    &=\mathbb{E}_{\rolerept{t}, \trajt{l}, \actt{l}} \left[ \log \left(\frac { p(\actt{l}\mid \trajt{l}, \rolerept{t})}{\text{SoftMax}\left( Q_i\left ( \actt{l} \mid \tau_i^l, \rolerept{t} ; \phi_Q \right) \right)}\right) + \log \left(\frac {\text{SoftMax}\left( Q_i\left ( \actt{l} \mid \tau_i^l, \rolerept{t} ; \phi_Q \right) \right)} {p(\actt{l} \mid \trajt{l})}\right) \right]\\
    &=\mathbb{E}_{\rolerept{t}, \trajt{l}} \left[ \mathbb{E}_{\actt{t}} \left[  \log \left(\frac {\text{SoftMax}\left( Q_i\left ( \actt{l} \mid \tau_i^l, \rolerept{t} ; \phi_Q \right) \right)} {p(\actt{l} \mid \trajt{l})}\right)\right] -  \mathbb{D}_{KL}\left( \text{SoftMax}\left( Q_i\left( \cdot \mid \trajt{l}, \rolerept{t}; \phi_Q \right) \right) \mid\mid p\left(\cdot \mid \trajt{l}, \rolerept{t}\right) \right) \right]\\
    &\geq \mathbb{E}_{\rolerept{t}, \trajt{l}, \actt{l}} \left[  \log \left(\frac {\text{SoftMax}\left( Q_i\left ( \actt{l} \mid \tau_i^l, \rolerept{t} ; \phi_Q \right) \right)} {p(\actt{l} \mid \trajt{l})}\right)\right].
\end{align*}

Therefore, we derive the intrinsic policy reward from the lower bound $\mathbb{E}_{\rolerept{t}, \trajt{l}, \actt{l}} \left[  \log \left(\frac {\text{SoftMax}\left( Q_i\left ( \actt{l} \mid \tau_i^l, \rolerept{t} ; \phi_Q \right) \right)} {p(\actt{l} \mid \trajt{l})}\right)\right]$ and compute the intrinsic reward as the KL Divergence $\mathbb{D}_{KL} \left( \text{SoftMax}\left( Q_i\left( \cdot \mid \trajt{l}, \rolerept{t}; \phi_Q \right) \right) \mid\mid p\left(\cdot \mid \trajt{l}\right) \right)$ between individual utilities $Q_i\left( \cdot \mid \trajt{l}, \rolerept{t}; \phi_Q \right)$ and role agnostic action probability $p\left(\cdot \mid \trajt{l}\right)$.

Likewise, we derive the dynamics intrinsic reward to obtain the lower bound due to the non-negativity of the KL Divergence $\mathbb{D}_{KL}(q_{\psi}(\obst{l+1} \mid \trajt{l}, \rolerept{t}, \actt{l}) \mid\mid p(\obst{l+1} \mid \trajt{l}, \rolerept{t}, \actt{l}))$ as:
\begin{align}
       \mathbb{E}_{\rolerept{t}, \trajt{l}, \actt{l}} \left[ \log \left(\frac{p(\obst{l+1} \mid \trajt{l}, \rolerept{t}, \actt{l})}
                    {p(\obst{l+1} \mid \trajt{l}, \actt{l})}\right) \right] \nonumber  \geq \mathbb{E}_{\rolerept{t}, \trajt{l}, \actt{l}} \left[ \log \left(\frac{q_{\psi}(\obst{l+1} \mid \trajt{l}, \rolerept{t}, \actt{l})}
                    {p(\obst{l+1} \mid \trajt{l}, \actt{l})}\right) \right].
\end{align}

We can write the ratio $\mathbb{E}_{\rolerept{t}, \trajt{l}, \actt{l}} \left[ \log \left(\frac{p(\obst{l+1} \mid \trajt{l}, \rolerept{t}, \actt{l})}
                    {p(\obst{l+1} \mid \trajt{l}, \actt{l})}\right) \right] $ as

\begin{align*}
    &= \mathbb{E}_{\rolerept{t}, \trajt{l}, \actt{l}} \left[ \log \left(\frac{p(\obst{l+1} \mid \trajt{l}, \rolerept{t}, \actt{l})}
                    {p(\obst{l+1} \mid \trajt{l}, \actt{l})}\right) \right]\\
    &= \mathbb{E}_{\rolerept{t}, \trajt{l}, \actt{l}} \left[ \log \left(\frac{p(\obst{l+1} \mid \trajt{l}, \rolerept{t}, \actt{l})}
                    {q_{\psi}(\obst{l+1} \mid \trajt{l}, \rolerept{t}, \actt{l})}\right) + \log \left( \frac
                    {q_{\psi}(\obst{l+1} \mid \trajt{l}, \rolerept{t}, \actt{l})} {p(\obst{l+1} \mid \trajt{l}, \actt{l})} \right) \right]\\
    &= \mathbb{E}_{\rolerept{t}, \trajt{l}, \actt{l}} \left[\log \left( \frac
                    {q_{\psi}(\obst{l+1} \mid \trajt{l}, \rolerept{t}, \actt{l})} {p(\obst{l+1} \mid \trajt{l}, \actt{l})} \right) - \mathbb{D}_{KL}(q_{\psi}(\obst{l+1} \mid \trajt{l}, \rolerept{t}, \actt{l}) \mid\mid p(\obst{l+1} \mid \trajt{l}, \rolerept{t}, \actt{l}))\right]\\
    &\geq \mathbb{E}_{\rolerept{t}, \trajt{l}, \actt{l}} \left[\log \left( \frac
                    {q_{\psi}(\obst{l+1} \mid \trajt{l}, \rolerept{t}, \actt{l})} {p(\obst{l+1} \mid \trajt{l}, \actt{l})} \right)\right].
\end{align*}
This yields the above-mentioned lower bound. From this, we derive the intrinsic rewards. Note that we use to separate models, a role conditioned DreamerV3 model $q_{\psi}(\obst{l+1} \mid \trajt{l}, \rolerept{t}, \actt{l})$ and a role-agnostic DreamerV3 model $p(\obst{l+1} \mid \trajt{l}, \actt{l})$. Here, the RSSM model $q_{\psi}(\obst{l+1} \mid \trajt{l}, \rolerept{t}, \actt{l})$ can be written as $q_{\psi}(\obst{l+1} \mid \trajt{l}, \rolerept{t}, \actt{l}) = q_{\psi_{\mathrm{seq}}}(h_i^{t+1} \mid \trajt{t}) q_{\psi_{\mathrm{dyn}}}(d_i^{t+1} \mid h_i^{t+1}, \rolerept{t}) q_{\psi_{\mathrm{dec}}}(\obst{t} \mid h_i^{t+1}, d_i^{t+1}, \rolerept{t})$, and $p(\obst{l+1} \mid \trajt{l}, \actt{l}) = p_{\phi_{\mathrm{seq}}}(h_i^{t+1} \mid \trajt{t}) p_{\phi_{\mathrm{dyn}}}(d_i^{t+1} \mid h_i^{t+1}) p_{\phi_{\mathrm{dec}}}(\obst{t} \mid h_i^{t+1}, d_i^{t+1})$. Therefore, the ratio $\mathbb{E}_{\rolerept{t}, \trajt{l}, \actt{l}} \left[\log \left( \frac
                    {q_{\psi}(\obst{l+1} \mid \trajt{l}, \rolerept{t}, \actt{l})} {p(\obst{l+1} \mid \trajt{l}, \actt{l})} \right)\right]$ is given as:
\begin{align*}
    &=\mathbb{E}_{\rolerept{t}, \trajt{l}, \actt{l}} \left[ \log \left( \frac{q_{\psi_{\mathrm{seq}}}(h_i^{t+1} \mid \trajt{t}) q_{\psi_{\mathrm{dyn}}}(d_i^{t+1} \mid h_i^{t+1}, \rolerept{t}) q_{\psi_{\mathrm{dec}}}(\obst{t} \mid h_i^{t+1}, d_i^{t+1}, \rolerept{t})}{f_{\phi_{\mathrm{seq}}}(h_i^{t+1} \mid \trajt{t}) p_{\phi_{\mathrm{dyn}}}(d_i^{t+1} \mid h_i^{t+1}) p_{\phi_{\mathrm{dec}}}(\obst{t} \mid h_i^{t+1}, d_i^{t+1})}\right) \right]\\
    &\approx \mathbb{E}_{\rolerept{t}, \trajt{l}, \actt{l}} \left[ \log \left( \frac{ q_{\psi_{\mathrm{dyn}}}(d_i^{t+1} \mid h_i^{t+1}, \rolerept{t}) q_{\psi_{\mathrm{dec}}}(\obst{t} \mid h_i^{t+1}, d_i^{t+1}, \rolerept{t})}{ p_{\phi_{\mathrm{dyn}}}(d_i^{t+1} \mid h_i^{t+1}) p_{\phi_{\mathrm{dec}}}(\obst{t} \mid h_i^{t+1}, d_i^{t+1})}\right) \right] \;\;\;\;(\text{sequence models will yield similar hidden states})\\
    &= \Bigl(
          \log \,q_{\psi_{\mathrm{dec}}}\bigl(\obst{l+1} \mid h_i^{l+1},\, d_i^{l+1},\, z_i^{t}\bigr) +
          \,\log \,q_{\psi_{\mathrm{dyn}}}\bigl(d_i^{l+1} \mid h_i^{l+1},\, z_i^{t}\bigr)
        \Bigr)\\
    &\;\;\;\;\;-
        \Bigl(
          \log \,p_{\phi_{\mathrm{dec}}}\bigl(\obst{l+1} \mid h_i^{l+1},\, d_i^{l+1}\bigr) +
          \,\log \,p_{\phi_{\mathrm{dyn}}}\bigl(d_i^{l+1} \mid h_i^{l+1}\bigr)
        \Bigr).
\end{align*}

We add the intrinsic reward hyper-parameters $\beta_1$ to regulate the contributions of the log-likelihood of the observations under the role-conditioned model. Likewise, we add the hyperparameter $\beta_2$ to regulate the relative weight of the log-likelihoods between of the observations through the decoder model and the future latent state under the dynamics predictor. The final dynamics intrinsic reward $r^{t}_{i,l+1,\mathrm{dyn}}$ computed at a future timestep $l+1$ from a given timestep $t$ is given by:
\begin{align*}
    r^{t}_{i,l+1,\mathrm{dyn}} = &\beta_{1}\Bigl(
          \log \,q_{\psi_{\mathrm{dec}}}\bigl(\obst{l+1} \mid h_i^{l+1},\, d_i^{l+1},\, z_i^{t}\bigr) +
          \beta_{2}\,\log \,q_{\psi_{\mathrm{dyn}}}\bigl(d_i^{l+1} \mid h_i^{l+1},\, z_i^{t}\bigr)
        \Bigr)
         \\&-
        \Bigl(
          \log \,p_{\phi_{\mathrm{dec}}}\bigl(\obst{l+1} \mid h_i^{l+1},\, d_i^{l+1}\bigr) +
          \beta_{2}\,\log \,p_{\phi_{\mathrm{dyn}}}\bigl(d_i^{l+1} \mid h_i^{l+1}\bigr)
        \Bigr).
\end{align*}
\newpage
\section{Algorithm}
\label{app:algorithm}
We outline the algorithm for \methodname\ based on ACORM and highlight the differences in Blue.
\begin{algorithm}[H]
\caption{\methodname\ Based on ACORM and QMIX}
\label{alg:R3DM}
\textbf{Input:} $\historyem$: agent's trajectory encoder, $\roleenc$: role encoder, $|M|$: number of clusters, $\psi_{\text{seq}}$, $\psi_e$, $\psi_{\text{dyn}}$, and  $\psi_{\text{dec}}$: role-conditioned DreamerV3 RSSM network parameters for sequence model, encoder model,  dynamics model, and decoder model respectively, $\phi_{\text{seq}}$, $\phi_e$, $\phi_{\text{dyn}}$, and $\phi_{\text{dec}}$: role-agnostic DreamerV3 RSSM counterparts, $T_{cl}$: time interval for updating contrastive loss, $n$: number of agents, $B$: replay buffer, $T$: time horizon of a learning episode, $\zeta$: MOCO momentum hyperparameter, $\roleenc'$: key role encoder, $\alpha, \beta_1, \beta_2, \beta_3 $: Intrinsic Reward hyperparameters\\
\textbf{Output:} Parameters of individual Q-network $\phi_Q$ and mixing network $\phi$.
\begin{algorithmic}[1]
\STATE Initialize all network parameters and replay buffer $B$ for storing agent trajectories.
\FOR{episode = 1, 2, \dots}
    \STATE Initialize history agent embedding $\agentembt{0}$ and action vector $\actt{0}$ for each agent.
    \FOR{$t = 1, 2, \dots, T$}
        \STATE Obtain each agent's partial observation $\{\obst{t}\}_{i=1}^n$ and global state $s^t$.
        \FOR{agent $i = 1, 2, \dots, n$}
            \STATE Calculate the agent embedding $\agentembt{t} = f_{\historyem}(\obst{t}; \actt{t-1}, \agentembt{t-1})$.
            and the role representation $\rolerept{t} = f_{\roleenc}(\agentembt{t})$.
            \STATE Select the local action $\actt{t}$ according to the individual Q-function $Q_i(\agentembt{t}, \actt{t})$.
        \ENDFOR
        \STATE Execute joint action $\mathbf{a}^t = [a_1^t, a_2^t, \dots, a_n^t]^\top$, and obtain global reward $r^t$.
    \ENDFOR
    \STATE Append the complete trajectory to $B$ = $B \bigcup \{ [\obst{t}]_{i=1}^n, s^t, [\actt{t}]_{i=1}^n, r^t, [\agentembt{t}]_{i=1}^n, [\rolerept{t}]_{i=1}^n \}_{t=1}^T$ .
    \IF{episode mod $T_{cl} == 0$}
        \STATE Sample a batch of trajectories from $B$.
        \STATE Partition agent embeddings $\{\agentembt{t}\}_{i=1}^n$ into $|M|$ clusters to obtain cluster allocation matrix $C_{ij}= \{0, 1\}$ where $j \in \{0, ..., |M|-1\}$ using K-means.
        \STATE Update key role encoder parameters $\roleenc' = (1- \zeta) \roleenc + \zeta \roleenc'$
        \FOR{agent $i = 1, 2, \dots, n$}
            \STATE Obtain cluster $q$ such that $C_{iq} = 1$
            \STATE Construct positive keys $k^+ = \{f_{\roleenc'}(\agentembjt{j}{t}) : j \in \oplus\}$, where $\oplus = [j : C_{jq} = 1]$, while negative keys are $k^- = \{f_{\roleenc'}(\agentembjt{j}{t}) : j \notin \oplus\}$
        \ENDFOR
        \STATE Update contrastive learning loss according to Eq. \ref{eq:cont_loss}.
    \ENDIF

    \STATE \textcolor{blue}{Sample minibatch $B'\in B$ and update RSSM network parameters $\psi_{\text{seq}}$, $\psi_e$, $\psi_{\text{dyn}}$, and $\psi_{\text{dec}}$.}
    \STATE \textcolor{blue}{Sample minibatch $B''\in B$ and update role-agnostic RSSM network parameters $\phi_{\text{seq}}$, $\phi_e$, $\phi_{\text{dyn}}$, and $\phi_{\text{dec}}$.}
    \STATE \textcolor{blue}{Sample minibatch $B'''\in B$ of size $K$. }
    \FOR{\textcolor{blue}{$k = 1, 2, \dots, K$}}
        
        \FOR{\textcolor{blue}{time $t = 1, 2 \dots, n$}}
            \STATE \textcolor{blue}{Set $r^t_{\text{int},k} = 0$.}
            \FOR{\textcolor{blue}{agent $i = 1, 2, \dots, n$}}
                \STATE \textcolor{blue}{Compute $r^{\text{pol}}_{i,k,t}$ from Eq \ref{eq:int_policy} from individual Q-network for $i$ and role representation $z^t_{i,k}$.}
                \STATE \textcolor{blue}{Compute $r^{\text{dyn}}_{i,k,t}$ from Eq \ref{eq:int_dyn} from $z^t_{i,k}$ through models with parameters $\psi_{\text{dyn}}$, $\psi_{\text{dec}}$, $\phi_{\text{dyn}}$, and $\phi_{\text{dec}}$}.

            \ENDFOR
            \STATE \textcolor{blue}{Set $r_{\text{int},k}^t = \sum_{i \in I} \beta_3 r^{\text{pol}}_{i, k t} + r^{\text{dyn}}_{i,k,t}$.}
            \STATE \textcolor{blue}{Set $r^t_k = \alpha r_{\text{int},k}^t + r^t_k$.}
        \ENDFOR
    \ENDFOR
    \STATE \textcolor{blue}{Update the parameters of individual Q-network $\phi_Q$ and the mixing network $\phi$ with modified minibatch $B'''$}.
\ENDFOR
\end{algorithmic}
\end{algorithm}



\end{document}